\providecommand{\tabularnewline}{\\}
\newcommand{\lyxaddress}[1]{
\par {\raggedright #1
\vspace{1.4em}
\noindent\par}
}
\theoremstyle{plain}
\newtheorem{thm}{\protect\theoremname}
  \theoremstyle{plain}
  \newtheorem{lem}[thm]{\protect\lemmaname}
  \theoremstyle{definition}
  \newtheorem{defn}[thm]{\protect\definitionname}
  \theoremstyle{remark}
  \newtheorem{conclusion}[thm]{\protect\conclusionname}
\definecolor{shadecolor}{gray}{0.9}
  \providecommand{\conclusionname}{Conclusion}
  \providecommand{\definitionname}{Definition}
  \providecommand{\lemmaname}{Lemma}
\providecommand{\theoremname}{Theorem}
\begin{document}

\title{A Brownian Particle and Fields II:\\
Radiation Reaction as an Application}

\author{{\Large{}Keita Seto}\thanks{keita.seto@eli-np.ro}}

\maketitle

\lyxaddress{\begin{center}
Extreme Light Infrastructure \textendash{} Nuclear Physics (ELI-NP)
/ \\
Horia Hulubei National Institute for R\&D in Physics and Nuclear Engineering
(IFIN-HH), \\
30 Reactorului St., Bucharest-Magurele, jud. Ilfov, P.O.B. MG-6, RO-077125,
Romania.
\par\end{center}}
\begin{abstract}
{\bf Radiation reaction} has been investigated traditionally in classical
dynamics and recently in non-linear QED as {\bf {high-intensity field physics}}
produced by high-intensity lasers. Its quantumness is predicted by
the factor in the radiation formula.  In this {\bf {Volume II}},
the quantization of radiation reaction by using a Brownian scalar
electron is discussed for obtaining the above radiation formula. Finally,
its stochasticity is found as the origin of the factor of its quantumness
in high-intensity field physics regime.
\end{abstract}
$\,$$\,$$\,$$\,$

Keyword:

{[}Physics{]} Stochastic quantum dynamics, relativistic motion, field
generation

{[}mathematics{]} Applications of stochastic analysis

$\,$$\,$$\,$$\,$

\maketitle

\thispagestyle{fancy}
\rhead{\texttt{preprint:ELI-NP/RA5-TDR 0004}} 
\lhead{}
\renewcommand{\headrulewidth}{0.0pt}

\setstretch{1.2}

\vfill{}
\newpage{}\thispagestyle{fancy}
\rhead{Keita Seto}
\lhead{ELI-NP/IFIN-HH}
\renewcommand{\headrulewidth}{1.0pt}\tableofcontents

$\,$$\,$$\,$\pagebreak{}

\begin{center}
{\bf Notation and Conventions}
\par\end{center}

\begin{center}
\begin{tabular}{ll}
\hline 
Symbol & Description\tabularnewline
\hline 
\hline 
$c$ & Speed of light\tabularnewline
$\hbar$ & Planck's constant\tabularnewline
$m_{0}$ & Rest mass of an electron\tabularnewline
$e$ & Charge of an electron\tabularnewline
$\mathbb{V_{\mathrm{M}}^{\mathrm{4}}}$ & 4-dimentional standard vector space for the metric affine space\tabularnewline
$g$ & Metric on $\mathbb{V_{\mathrm{M}}^{\mathrm{4}}}$; $g\coloneqq\mathrm{diag}(+1,-1,-1,-1)$\tabularnewline
$\mathbb{A}^{4}(\mathbb{V_{\mathrm{M}}^{\mathrm{4}}},g)$ & 4-dimentional metric affine space with respect to $\mathbb{V_{\mathrm{M}}^{\mathrm{4}}}$
and $g$\tabularnewline
$\mathscr{B}(I)$ & Borel $\sigma$-algebra of a topological space $I$.\tabularnewline
$(\mathbb{A}^{4}(\mathbb{V_{\mathrm{M}}^{\mathrm{4}}},g),\mathscr{B}(\mathbb{A}^{4}(\mathbb{V_{\mathrm{M}}^{\mathrm{4}}},g)),\mu)$ & Minkowski spacetime\tabularnewline
$\varphi_{E}\coloneqq\{\varphi_{E}^{A}|A\in\mathrm{set\,of\,indexes}\}$ & Coordinate mapping on $E$; $\varphi_{E}^{A}:E\rightarrow\mathbb{R}$\tabularnewline
$\left(\mathit{\Omega},\mathcal{F},\mathscr{P}\right)$ & Probability space\tabularnewline
$\mathbb{E}\llbracket\hat{X}(\bullet)\rrbracket\coloneqq\int_{\Omega}d\mathscr{P}(\omega)\,\hat{X}(\omega)$ & Expectation of $\hat{X}(\bullet)\coloneqq\{\hat{X}(\omega)|\omega\in\varOmega\}$\tabularnewline
$\mathbb{E}\llbracket\hat{X}(\bullet)|\mathcal{\mathscr{C}}\rrbracket$ & Conditional expectation of $\hat{X}(\bullet)$ given $\mathcal{\mathscr{C}}\subset\mathcal{F}$\tabularnewline
$\mathcal{\mathscr{P}}_{\tau}\subset\mathcal{F}$  & Sub-$\sigma$-algebra in the family of ''the past'', $\{\mathcal{\mathscr{P}}_{\tau}\}_{\tau\in\mathbb{R}}$\tabularnewline
$\mathscr{F}_{\tau}\subset\mathcal{F}$  & Sub-$\sigma$-algebra in the family of ''the future'', $\{\mathscr{F}_{\tau}\}_{\tau\in\mathbb{R}}$\tabularnewline
$\hat{x}(\circ,\bullet)$ & Dual progressively measurable process (D-progressive, D-process) \tabularnewline
 & {[}Relativistic kinematics of a scalar (spin-less) electron{]}\tabularnewline
 & $\hat{x}(\circ,\bullet)\coloneqq\{\hat{x}(\tau,\omega)\in\mathbb{A}^{4}(\mathbb{V_{\mathrm{M}}^{\mathrm{4}}},g))|\tau\in\mathbb{R},\omega\in\varOmega\}$\tabularnewline
$\mathscr{\mathcal{V}}\in\mathbb{V_{\mathrm{M}}^{\mathrm{4}}}\oplus i\mathbb{V_{\mathrm{M}}^{\mathrm{4}}}$ & Complex velocity; $\mathcal{V}^{\alpha}(x)\coloneqq i\lambda^{2}\times\partial{}^{\alpha}\ln\phi(x)+\begin{gathered}\frac{e}{m_{0}}\end{gathered}
A{}^{\alpha}(x)$\tabularnewline
$\mathcal{F}_{(\mathring{+})},\mathcal{F}_{(\mathring{-})}\in\mathbb{V_{\mathrm{M}}^{\mathrm{4}}}\otimes\mathbb{V_{\mathrm{M}}^{\mathrm{4}}}$ & Retarded field and advanced field\tabularnewline
$\mathfrak{F}\in\mathbb{V}_{\mathrm{M}}^{\mathrm{4}}\otimes\mathbb{V}_{\mathrm{M}}^{\mathrm{4}}$ & Effective radiation (reaction) field\tabularnewline
$\delta\hat{x}(\tau,\omega)\in\mathbb{V_{\mathrm{M}}^{\mathrm{4}}}$ & $\delta\hat{x}(\tau,\omega)\coloneqq\hat{x}(\tau,\omega)-\mathbb{E}\llbracket\hat{x}(\tau,\bullet)\rrbracket$\tabularnewline
\hline 
\end{tabular}
\par\end{center}

\pagebreak{}

\section{Introduction}

This series of the papers proposes quantum dynamics of a stochastic
scalar (spin-less) electron interacting with classical light fields.
It is equivalent to the system of the Klein-Gordon equation and the
Maxwell equation. In this {\bf Volume II}, we focus the topics on
{\bf radiation reaction} as an application of {\bf Volume I} \cite{Seto Volume I}.
It means {\bf the quantization of the Lorentz-Abraham-Dirac (LAD) equation of classical dynamics}.
Radiation reaction as a model of a radiating electron has become important
in the research projects of high-intensity lasers \cite{Mourou(1997a)},
represented by ``Extreme Light Infrastructure (ELI)'' \cite{ELI-NP,ELI-beams,ELI-ALPS}
for the last laser-plasma science \cite{Piazza(2012a)}. Corresponding
to the name of ``high-energy particle physics'', let us name such
a high-intensity laser science {\bf ''high-intensity field physics''}.
Before entering into the main topic, we summarize the available theoretical
works on radiation reaction. 

Denote by $\mathscr{B}(I)$ the Borel $\sigma$-algebra of a topological
space $I$ and a metric affine space $\mathbb{A}^{4}(\mathbb{V_{\mathrm{M}}^{\mathrm{4}}},g)$
with respect to a 4-dimentional standard vector space $\mathbb{V_{\mathrm{M}}^{\mathrm{4}}}$
and its metric $g\coloneqq\mathrm{diag}(+1,-1,-1,-1)$, let a measure
space $(\mathbb{A}^{4}(\mathbb{V_{\mathrm{M}}^{\mathrm{4}}},g),\mathscr{B}(\mathbb{A}^{4}(\mathbb{V_{\mathrm{M}}^{\mathrm{4}}},g)),\mu)$
be the Minkowski spacetime. The coordinate mappings $\varphi_{E}\coloneqq\{\varphi_{E}^{A}|A\in\mathrm{set\,of\,indexes},\varphi_{E}^{A}:E\rightarrow\mathbb{R}\}$
is introduced such that the index $A$ becomes $A=\mu$ if $E=\mathbb{V}_{\mathrm{M}}^{\mathrm{4}}$
and $A=\mu\nu$ when $E=\mathbb{V}_{\mathrm{M}}^{\mathrm{4}}\otimes\mathbb{V}_{\mathrm{M}}^{\mathrm{4}}$
($\mu,\nu=0,1,2,3$), even if we do not declare it explicitly.

\subsection{LAD model and QED correction}

In the non-relativistic classical dynamics, an electron emits its
energy by light by Larmor's formula such that $dW_{\mathrm{classical}}/dt=m_{0}\tau_{0}\dot{\boldsymbol{v}}^{2}$
\cite{Larmor}. Hence, the momentum transfer of an electron due to
its radiation has to be taken into account. It is associated by the
following energy balance relation:
\[
\frac{d}{dt}\left(\frac{1}{2}m_{0}\boldsymbol{v}^{2}\right)=\boldsymbol{F}_{\mathrm{ex}}\cdot\boldsymbol{v}-\frac{dW_{\mathrm{classical}}}{dt}
\]
Smearing this equation by its definite integral on the domain of $(-\infty,\infty)$
with the condition $[m_{0}\tau_{0}\dot{\boldsymbol{v}}\cdot\boldsymbol{v}]_{-\infty}^{\infty}=0$,
the Lorentz-Abraham (LA) equation is imposed:
\begin{equation}
m_{0}\dot{\boldsymbol{v}}=\boldsymbol{F}_{\mathrm{ex}}+m_{0}\tau_{0}\ddot{\boldsymbol{v}}\label{eq: LA}
\end{equation}
Then, the Lorentz-Abraham-Dirac (LAD) equation is derived as the covariant
form of (\ref{eq: LA}) in the Minkowski spacetime. The following
is the summary of the LAD model provided by P. A. M. Dirac \cite{Dirac(1938a)}.\begin{leftbar}
\begin{thm}[Classical dynamics]
\label{Classical_dynamics}Consider a single scalar electron system
denoted by the equation of motion and the Maxwell equation in $(\mathbb{A}^{4}(\mathbb{V_{\mathrm{M}}^{\mathrm{4}}},g),\mathscr{B}(\mathbb{A}^{4}(\mathbb{V_{\mathrm{M}}^{\mathrm{4}}},g)),\mu)$:
\begin{equation}
m_{0}\frac{dv^{\mu}}{d\tau}(\tau)=-e\left[F_{\mathrm{ex}}^{\mu\nu}(x(\tau))+F_{\mathrm{LAD}}^{\mu\nu}(x(\tau))\right]v_{\nu}(\tau)\label{eq:LAD-origin}
\end{equation}
\begin{equation}
\partial_{\mu}F^{\mu\nu}(x)=\mu_{0}\times\left[-ec\int_{\mathbb{R}}d\tau\,v^{\nu}(x(\tau))\delta^{4}(x-x(\tau))\right]\label{eq: classical Maxwell}
\end{equation}
Where $m_{0}$, $e$, $\mu_{0}$ and $c$ are the physical constants
of the rest mass, the charge of an electron, the vacuum permeability
and the speed of light. The vacuum permeability $\mu_{0}$ relates
to the vacuum permittivity $\varepsilon_{0}$, i.e., $\varepsilon_{0}\mu_{0}c^{2}=1$
in the SI unit. $v\in\mathbb{V_{\mathrm{M}}^{\mathrm{4}}}$ represents
the 4-velocity of an electron. The fields $F_{\mathrm{ex}}\in\mathbb{V_{\mathrm{M}}^{\mathrm{4}}}\otimes\mathbb{V_{\mathrm{M}}^{\mathrm{4}}}$
such that $\partial_{\mu}F_{\mathrm{ex}}^{\mu\nu}=0$ denotes an external
field and $F_{\mathrm{LAD}}\in\mathbb{V_{\mathrm{M}}^{\mathrm{4}}}\otimes\mathbb{V_{\mathrm{M}}^{\mathrm{4}}}$
satisfying $\partial_{\mu}F_{\mathrm{LAD}}^{\mu\nu}=0$ is, so-called,
''the LAD field (the effective radiation reaction field)'' which is
the homogeneous solution of (\ref{eq: classical Maxwell}).
\end{thm}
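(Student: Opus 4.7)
The plan is to reconstruct Dirac's 1938 derivation, which is the historical content of this statement. The starting point is the inhomogeneous Maxwell equation \eqref{eq: classical Maxwell} with a point current supported on the worldline $\{x(\tau)\}_{\tau\in\mathbb{R}}$. First I would solve it by Green's function techniques, producing the two Liénard--Wiechert solutions: the retarded field $F_{(\mathring{+})}$ and the advanced field $F_{(\mathring{-})}$, both singular on the worldline. Any physically admissible self-field differs from either by a homogeneous solution, so the task reduces to isolating the particular homogeneous combination whose contraction with $v_\nu$ is finite on the worldline and reproduces the Larmor energy-balance used to motivate \eqref{eq: LA}.

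Next I would expand $F_{(\mathring{\pm})}^{\mu\nu}$ in a Taylor series about the instantaneous position $x(\tau)$ in the proper-time offset $\varepsilon$, keeping both the $\mathcal{O}(1/\varepsilon)$ and finite contributions. The half-sum $\tfrac12\!\left[F_{(\mathring{+})}+F_{(\mathring{-})}\right]$ carries a Coulomb-like $1/\varepsilon$ piece proportional to $\dot v^\mu$; upon insertion into the Lorentz force it is absorbed into the observed rest mass by the renormalization $m_{\mathrm{bare}}+\delta m(\varepsilon)\mapsto m_0$. The half-difference
\begin{equation}
F_{\mathrm{LAD}}^{\mu\nu}\coloneqq\tfrac{1}{2}\!\left[F_{(\mathring{+})}^{\mu\nu}-F_{(\mathring{-})}^{\mu\nu}\right]
\end{equation}
is regular on the worldline because the singular parts of $F_{(\mathring{+})}$ and $F_{(\mathring{-})}$ coincide there, and it automatically satisfies $\partial_{\mu}F_{\mathrm{LAD}}^{\mu\nu}=0$ since both $F_{(\mathring{\pm})}$ solve the same inhomogeneous equation with identical source. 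Contracting $-eF_{\mathrm{LAD}}^{\mu\nu}v_\nu$ then yields the Abraham--Lorentz--Dirac four-force $m_{0}\tau_{0}\!\left[\ddot v^{\mu}-c^{-2}(\dot v\cdot\dot v)\,v^{\mu}\right]$; combining with the external Lorentz contribution $-eF_{\mathrm{ex}}^{\mu\nu}v_\nu$ produces \eqref{eq:LAD-origin}.

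The hard part is the singular limit $\varepsilon\to 0$: the worldline expansion of $F_{(\mathring{\pm})}$ generates divergences whose separation into a renormalizable piece and a finite radiation-reaction piece is \emph{a priori} prescription-dependent. Following Dirac, I would adopt the world-tube regularization, averaging the fields over a small spacelike 2-sphere of proper radius $\varepsilon$ surrounding each point of the trajectory and taking $\varepsilon\to 0$ only after subtracting the $1/\varepsilon$ mass counterterm; this makes the identification of $F_{\mathrm{LAD}}$ unambiguous, confirms that the homogeneous constraint $\partial_\mu F_{\mathrm{LAD}}^{\mu\nu}=0$ is preserved by the regularization, and requires no additional physical input beyond energy-momentum conservation along the worldline. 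The resulting structure is then taken as the classical reference model to be quantized in the remainder of \textbf{Volume II}.
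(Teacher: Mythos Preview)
Your proposal is correct and reproduces the Dirac 1938 argument that the paper itself cites for this statement. Note, however, that in the paper Theorem~\ref{Classical_dynamics} is not accompanied by a proof at all: it is presented explicitly as ``the summary of the LAD model provided by P.~A.~M.~Dirac,'' i.e.\ as a model \emph{declaration} rather than a theorem to be demonstrated. The computational content you describe (Li\'enard--Wiechert solutions, proper-time expansion about the worldline, the $\tfrac12(F_{\mathrm{ret}}-F_{\mathrm{adv}})$ split, and the divergent $\tfrac12(F_{\mathrm{ret}}+F_{\mathrm{adv}})$ absorbed as an electromagnetic mass) is stated, not derived, in the following Lemma~\ref{lemma_F_LAD} and in equations (\ref{eq: classical-hom})--(\ref{eq: classical-singular}) of Section~\ref{Radiation}, with references to Dirac and Hartemann for the details.

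The one place your write-up goes beyond what the paper records is the regularization: you invoke Dirac's world-tube averaging over a spacelike $2$-sphere of radius $\varepsilon$, whereas the paper simply displays the divergent coefficient $\int_{\mathbb{R}}d\tau'\,\delta(\tau'-\tau)/|\tau'-\tau|$ in the half-sum and identifies it as the infinite electromagnetic mass $m_{\mathrm{EM}}$ without specifying a limiting procedure. Both lead to the same identification of $F_{\mathrm{LAD}}$, so there is no discrepancy; your version is just more explicit about how the singular limit is controlled.
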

\end{leftbar}The homogeneous solution $F_{\mathrm{LAD}}$ is expressed
by the retarded field and the advanced field which are the solutions
of (\ref{eq: classical Maxwell}) \cite{Dirac(1938a),Hartemann(2001)}.\begin{leftbar}
\begin{lem}[Retarded, advanced and LAD fields]
\label{lemma_F_LAD}By solving (\ref{eq: classical Maxwell}), the
retarded field $F_{\mathrm{ret}}$ and the advanced field $F_{\mathrm{adv}}$
are obtained. By the substitution $x=x(\tau)$,
\begin{eqnarray}
F_{\epsilon}(x(\tau))|_{\epsilon=\mathrm{ret,adv}} & = & \frac{e}{8\pi\varepsilon_{0}c^{5}}\left[\frac{dv}{d\tau}(\tau)\otimes v(\tau)-v(\tau)\otimes\frac{dv}{d\tau}(\tau)\right]\int_{\mathbb{R}}d\tau'\frac{\delta(\tau'-\tau)}{|\tau'-\tau|}\nonumber \\
 &  & -\mathrm{sign}(\epsilon)\times\frac{e}{6\pi\varepsilon_{0}c^{5}}\left[\frac{d^{2}v}{d\tau^{2}}(\tau)\otimes v(\tau)-v(\tau)\otimes\frac{d^{2}v}{d\tau^{2}}(\tau)\right]\,.
\end{eqnarray}
Where, $\mathrm{sign}(\mathrm{ret})=1$ and $\mathrm{sign}(\mathrm{adv})=-1$.
The LAD field $F_{\mathrm{LAD}}$ is given by 
\begin{eqnarray}
F_{\mathrm{LAD}}(x(\tau)) & \coloneqq & \frac{F_{\mathrm{ret}}(x(\tau))-F_{\mathrm{adv}}(x(\tau))}{2}\nonumber \\
 & = & -\frac{e}{6\pi\varepsilon_{0}c^{5}}\left[\frac{d^{2}v}{d\tau^{2}}(\tau)\otimes v(\tau)-v(\tau)\otimes\frac{d^{2}v}{d\tau^{2}}(\tau)\right]\,,\label{eq: F_LAD}
\end{eqnarray}
as the homogeneous solution of the Maxwell equation (\ref{eq: classical Maxwell})
at $x=x(\tau)$. 
\end{lem}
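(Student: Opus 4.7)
The plan is to solve the inhomogeneous Maxwell equation (\ref{eq: classical Maxwell}) with the retarded and advanced Green's functions, evaluate the resulting field strength at a point $x$ near the worldline, then take the limit $x\to x(\tau)$ by a controlled proper-time expansion. I would begin by introducing the Liénard--Wiechert potentials
\[
A_{\epsilon}^{\nu}(x)=-\frac{e\mu_{0}c}{2\pi}\int_{\mathbb{R}}d\tau'\,v^{\nu}(\tau')\,\delta\bigl(g(x-x(\tau'),x-x(\tau'))\bigr)\,\theta\bigl(\mathrm{sign}(\epsilon)\cdot(x^{0}-x^{0}(\tau'))\bigr),
\]
for $\epsilon\in\{\mathrm{ret},\mathrm{adv}\}$, which satisfy the Maxwell equation with the point-source current appearing on the right-hand side of (\ref{eq: classical Maxwell}). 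Computing $F_{\epsilon}^{\mu\nu}=\partial^{\mu}A_{\epsilon}^{\nu}-\partial^{\nu}A_{\epsilon}^{\mu}$ and using the standard $\delta'$ identities under the $\tau'$-integral leads to the usual split into a ``velocity'' (Coulomb-like) and an ``acceleration'' (radiation-like) piece of $F_{\epsilon}^{\mu\nu}(x)$ in terms of the null separation $R^{\mu}=x^{\mu}-x^{\mu}(\tau_{\epsilon})$ between $x$ and the retarded/advanced worldline point $\tau_{\epsilon}$.

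Next, I would restrict $x$ to a tubular neighbourhood of the worldline and parametrise the approach by setting $x=x(\tau)+\eta\,n$ with a small $\eta>0$ and a spacelike unit normal $n$. Writing $\Delta\tau\coloneqq\tau_{\epsilon}-\tau$, I would Taylor-expand
\[
x(\tau_{\epsilon})=x(\tau)+\Delta\tau\,v(\tau)+\tfrac{1}{2}\Delta\tau^{2}\dot{v}(\tau)+\tfrac{1}{6}\Delta\tau^{3}\ddot{v}(\tau)+O(\Delta\tau^{4}),
\]
together with the analogous expansions of $v(\tau_{\epsilon})$ and $\dot v(\tau_{\epsilon})$, and use the on-shell identities $g(v,v)=c^{2}$, $g(v,\dot v)=0$, $g(v,\ddot v)=-g(\dot v,\dot v)$. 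Solving the null condition $g(R,R)=0$ for $\Delta\tau$ in powers of $\eta$ gives the retarded/advanced displacement to the worldline; substituting back and using $\mathrm{sign}(\mathrm{ret})=+1$, $\mathrm{sign}(\mathrm{adv})=-1$, the expansion of $F_{\epsilon}^{\mu\nu}$ naturally organises into the formally divergent antisymmetric term in $\dot{v}\otimes v$ that is even in $\mathrm{sign}(\epsilon)$ and represented by $\int d\tau'\,\delta(\tau'-\tau)/|\tau'-\tau|$, plus a finite antisymmetric term in $\ddot{v}\otimes v$ carrying an explicit $\mathrm{sign}(\epsilon)$. This is exactly the two-line expression in the statement.

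Finally, forming $F_{\mathrm{LAD}}=(F_{\mathrm{ret}}-F_{\mathrm{adv}})/2$ causes the $\mathrm{sign}(\epsilon)$-even divergent part to cancel identically, while the $\mathrm{sign}(\epsilon)$-odd Abraham part doubles and then halves, producing (\ref{eq: F_LAD}). A short verification that $\partial_{\mu}F_{\mathrm{LAD}}^{\mu\nu}=0$ at $x(\tau)$ is immediate from the fact that the difference of two Green's-function solutions of (\ref{eq: classical Maxwell}) with the same source is a solution of the homogeneous Maxwell equation.

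The main obstacle is the careful handling of the singular limit $x\to x(\tau)$: one must demonstrate that the apparently divergent piece is precisely symmetric under the retarded/advanced exchange so that it drops out of the difference, and that the Taylor-expansion coefficients at the third order in $\Delta\tau$ genuinely reproduce the antisymmetric Abraham structure $\ddot v\otimes v - v\otimes\ddot v$. Everything else is bookkeeping with the null-cone Jacobian and the kinematic identities for $v$.
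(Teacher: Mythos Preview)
Your proposal is correct and follows the standard Dirac derivation. The paper itself does not give an explicit proof of this lemma: it is stated as background with citations to Dirac (1938) and Hartemann (2001), so there is no ``paper's own proof'' to compare against in detail. That said, the Green's-function plus proper-time Taylor expansion you outline is precisely the method the paper invokes later, in Section~\ref{Radiation}, when it derives the stochastic analogue of this result (Lemmas~\ref{F_average}--\ref{F_full}); there the expansion is carried out along $\mathbb{E}\llbracket\hat{x}(\tau,\bullet)\rrbracket$ using the same Green's function $G_{(\mathring{\pm})}(x,x')=\theta(\mathring{\pm}\Delta x^{0})/2\pi\times\delta(\Delta x_{\mu}\Delta x^{\mu})$ and the same near-coincidence expansion in $(\tau'-\tau)$, leading to the identical split between a divergent $\dot v\otimes v$ piece symmetric under ret/adv and a finite $\ddot v\otimes v$ piece that is antisymmetric. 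Your plan is therefore fully aligned with both the classical references and the paper's own later computations.
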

\end{leftbar}\begin{snugshade}\begin{leftbar}
\begin{defn}[LAD equation]
\label{LAD}By substituting (\ref{eq: F_LAD}) for (\ref{eq:LAD-origin}),
\begin{equation}
m_{0}\frac{dv^{\mu}}{d\tau}(\tau)=-eF_{\mathrm{ex}}^{\mu\nu}(x(\tau))v_{\nu}(\tau)+\frac{m_{0}\tau_{0}}{c^{2}}\left[\frac{d^{2}v^{\mu}}{d\tau^{2}}(\tau)\cdot v^{\nu}(\tau)-\frac{d^{2}v^{\nu}}{d\tau^{2}}(\tau)\cdot v^{\mu}(\tau)\right]v_{\nu}(\tau)\label{eq:LAD}
\end{equation}
is named the Lorentz-Abraham-Dirac (LAD) equation \cite{Dirac(1938a)}.
Where, $\tau_{0}\coloneqq e^{2}/6\pi\varepsilon_{0}m_{0}c^{3}=O(10^{-24}\mathrm{sec})$.
The non-relativistic limit of (\ref{eq:LAD-origin}) tends to (\ref{eq: LA}).
The energy loss by radiation is estimated by the invariant form of
Larmor's formula:
\begin{eqnarray}
\frac{dW_{\mathrm{classical}}}{dt}(\tau) & = & -m_{0}\tau_{0}\frac{dv_{\alpha}}{d\tau}(\tau)\cdot\frac{dv^{\alpha}}{d\tau}(\tau)\;(\geq0)\label{eq: dW_classical/dt}\\
 & = & -\frac{\tau_{0}}{m_{0}}g_{\alpha\beta}\left(-eF_{\mathrm{ex}}^{\alpha\mu}v_{\mu}\right)\left(-eF_{\mathrm{ex}}^{\beta\nu}v_{\nu}\right)+O(\tau_{0}^{2})
\end{eqnarray}

\end{defn}
\end{leftbar}\end{snugshade}

This LAD equation has been treated as the standard model of radiation
reaction in the previous laser-plasma simulations for explaining radiation
from classical particles since $d(m_{o}c^{2}\gamma)/dt\approx\boldsymbol{F}_{\mathrm{ex}}\cdot\boldsymbol{v}-dW_{\mathrm{classical}}/dt$
{[}the 0th component of (\ref{eq:LAD}){]}. However, its quantum correction
has started to be considered in the recent works provided by high-intensity
lasers \cite{Seto(2015a),I. Sokolov(2011a)}. It is found by the following
steps: For an external field of {\bf a plane wave}, Volkov solutions
of a Dirac equation are derived \cite{Volkov(1935),Bergou(1980)}.
Then, a family of these Volkov solutions imposes its orthogonality
and completeness \cite{Zakowicz(2005),Boca-Florescu(2010)}, thus,
the family is regarded as a basis of Dirac fields absorbing an external
field (a dressed Dirac field) \cite{Bergou(1980),Berestetskii-Lifshitz QED}.
An $S$-matrix by a family of Volkov solutions is introduced \cite{Furry(1951),Ritus(1972a)}.
Finally, a non-linear Compton scattering as radiation reaction can
be calculated \cite{Brown-Kibble(1964),Nikishov(1964a)),Nikishov(1964b)}.
This is so-called the Furry picture. By resulting them, the radiation
formula is derived as follows \cite{Seto(2015a),I. Sokolov(2011a),A. Sokolov(1986)}:
\begin{equation}
\boxed{\frac{dW_{\mathrm{QED}}}{dt}=q(\chi)\times\frac{dW_{\mathrm{classical}}}{dt}}\label{eq: Q=000026C Larmor}
\end{equation}
Thus, the two physical regimes between classical dynamics and QED
are bridged by $q(\chi)$. Therefore, $q(\chi)$ is regarded as a
quantumness of radiation. Where, 
\begin{equation}
\chi\coloneqq\frac{3}{2}\frac{\hbar}{m_{0}^{2}c^{3}}\sqrt{-g_{\mu\nu}(-eF_{\mathrm{ex}}^{\mu\alpha}v_{\alpha})(-eF_{\mathrm{ex}}^{\nu\beta}v_{\beta})}\,,
\end{equation}
\begin{equation}
q(\chi)=\frac{9\sqrt{3}}{8\pi}\int_{0}^{\chi^{-1}}dr\,r\left[\int_{\frac{r}{1-\chi r}}^{\infty}dr'K_{5/3}(r')+\frac{\chi^{2}r^{2}}{1-\chi r}K_{2/3}\left(\frac{r}{1-\chi r}\right)\right]\,.\label{eq: q_chi}
\end{equation}
Many authors have investigated this effect for their future experiments
by high-intensity lasers \cite{Seto(2015a),I. Sokolov(2011a),Koga(2005),Harvey(2009),Mackenroth(2011),Neitz(2013),Seipt(2013),Anton(2013),RA5,Roos(1966),Fried(1966)}.
For example at ELI-NP in Romania \cite{ELI-NP,RA5}, they aim the
factor $q(\chi)=0.3$ by a combination between a high-intensity laser
of $O(10^{22}\mathrm{W/cm^{2}})$ and an electron of $O(1\mathrm{GeV})$
{[}see Figure \ref{q_chi}{]} \cite{Seto(2015a)}. Thus, the behavior
of $q(\chi)$ {\underline {\bf depending on laser intensities}} is
important to understand the transition from classical dynamics to
non-linear QED. 
\begin{figure}
\noindent \centering{}\includegraphics[scale=1.3]{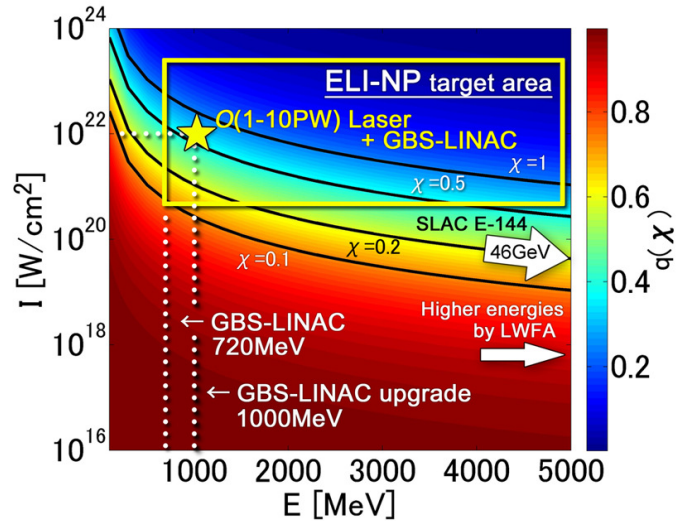}\caption{\label{q_chi}Quantumness of radiation, $q(\chi)$. This is the plot
of (\ref{eq: q_chi}) with respect to energies of an electron and
laser intensities. When we choose the combination between a laser
intensity of $O(10^{22}\mathrm{W}/\mathrm{cm}^{2})$ and an electron
energy of $O(1\mathrm{GeV})$, the factor $q(\chi)$ reaches $0.3$
which is the feasible regime produced by the ELI-NP facility \cite{RA5}.
The SLAC E-144 included the experiments of the non-linear Compton
scatterings by the combination of\textcolor{red}{{} }$O(10^{18}\mathrm{W}/\mathrm{cm}^{2}+46\mathrm{GeV})$\textcolor{red}{{}
\cite{SLAC}}. }
\end{figure}

However, can we generalize the applicable range of this Furry picture
from an external field of {\bf a plane wave}? Unfortunately, there
are the mathematical demonstrations only in the case of a plane wave
\cite{Zakowicz(2005),Boca-Florescu(2010),Guo(1991)}. It is a major
interest for the expression of strong focused or superpositioned lasers
{[}A. Di Piazza{]}. In addition, quantum dynamics doesn't provide
{\bf the method to draw a real trajectory of a particle}. It is very
helpful for laser-plasma simulations if it is improved. Let us aim
the second purpose in this article.

\subsection{Proposal: by a Brownian particle and fields}

\begin{figure}
\centering{}\includegraphics[scale=0.65]{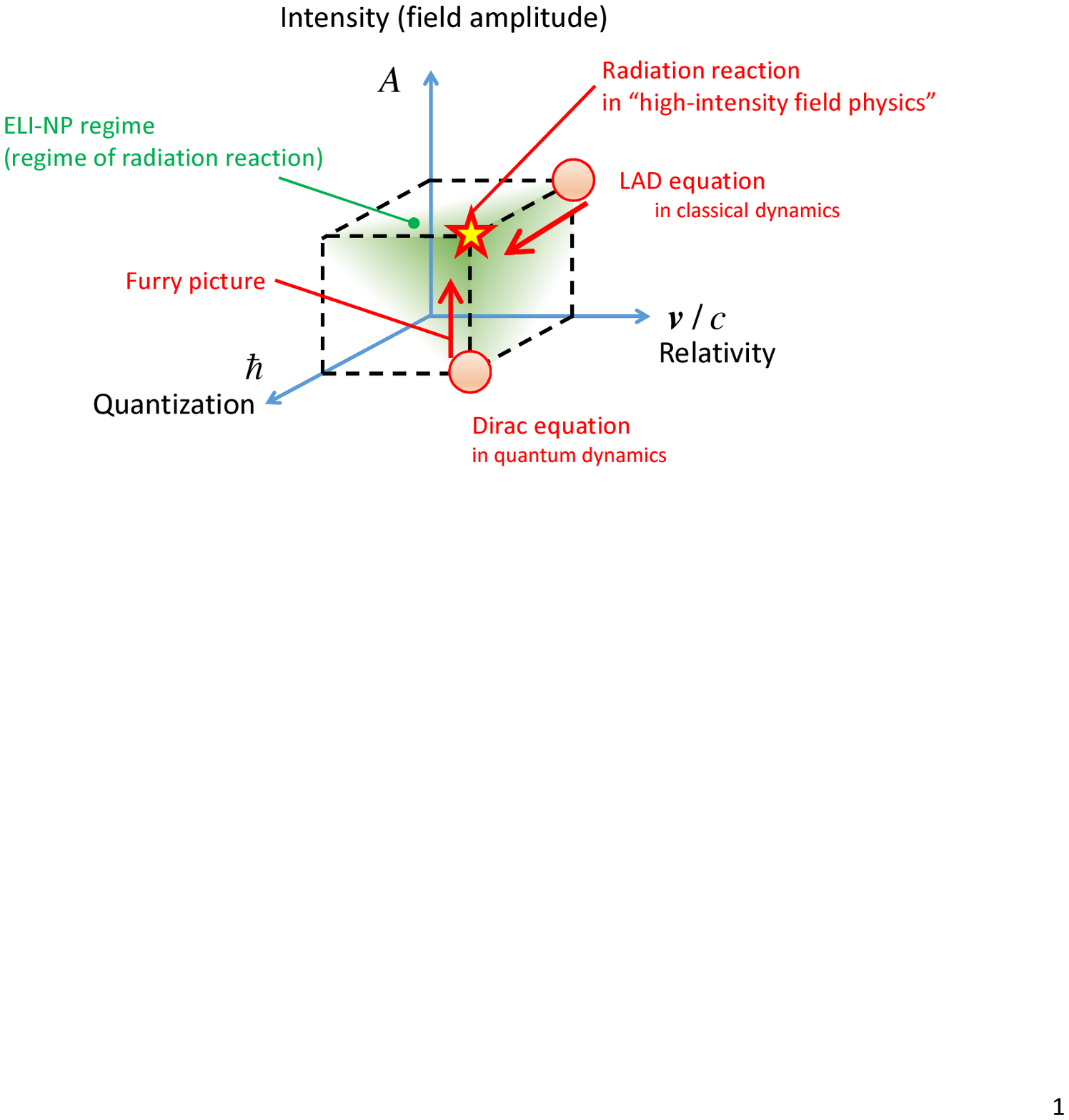}\caption{The physical regime of an electron. The point at ''the star'' is the
regime of high-intensity field physics. The Furry picture is the first
way from relativistic quantum dynamics to the point of ''the star''.
On the other hand, the quantization after reaching high-intensity
``classical'' dynamics is its second candidate. The Lorentz-Abraham-Dirac
(LAD) equation at high-intensity ``classical'' dynamics ({\bf Definition \ref{LAD}})
is the standard model of radiation reaction. ELI-NP is the state-of-the-arts
laser facility which can perform the green area and proposes the real
experiments of radiation reaction \cite{ELI-NP,RA5}\label{figHIFP}\setstretch{1.2}}
\end{figure}

The regime of {\bf high-intensity field physics} can be found at
quantized, relativistic and high-intensity field interactions marked
by ''the star'' in {\bf Figure \ref{figHIFP}}. The Furry picture
is the way from the relativistic quantum dynamics. The quantization
from the LAD equation at classical, relativistic and high-intensity
regime shall be another candidate. Therefore, we discuss its feasibility
by Nelson's stochastic quantization \cite{Nelson(1966a),Nelson(2001_book)},
holding the highly compatible expressions between classical and quantum
dynamics. The basic construction of relativistic scalar quantum dynamics
with its Brownian kinematics has been given in {\bf Volume I} \cite{Seto Volume I},
let us summarize those ideas. \begin{leftbar}
\begin{defn}[$\{\mathscr{P}_{\tau}\}$ and $\{\mathscr{F}_{\tau}\}$ \cite{Seto Volume I}]
For a probability space $\left(\mathit{\Omega}^{1\mathchar`-\mathrm{dim}},\mathcal{F}^{1\mathchar`-\mathrm{dim}},\mathscr{P}^{1\mathchar`-\mathrm{dim}}\right)$,
let $\{\mathcal{P}_{\tau}\}_{\tau\in\mathbb{R}}$ and $\{\mathcal{F}_{\tau}\}_{\tau\in\mathbb{R}}$
are increasing and decreasing families of 1-dimensional continuous
processes such that $\mathcal{P}_{\tau}\subset\mathcal{F}^{1\mathchar`-\mathrm{dim}}$
satisfies $-\infty<\sigma\leq\tau\Longrightarrow\mathcal{P}_{\sigma}\subset\mathcal{P}_{\tau}$
and $\mathcal{F}_{\tau}\subset\mathcal{F}^{1\mathchar`-\mathrm{dim}}$
fulfills $\tau\leq\sigma<\infty\Longrightarrow\mathcal{F}_{\tau}\supset\mathcal{F}_{\sigma}$.
Then for $\varOmega\coloneqq\times^{4}\mathit{\Omega}^{1\mathchar`-\mathrm{dim}}$,
$\mathcal{F}\coloneqq\otimes^{4}\mathcal{F}^{1\mathchar`-\mathrm{dim}}$
and $\mathscr{P}\coloneqq\times^{4}\mathscr{P}^{1\mathchar`-\mathrm{dim}}$,
there are 4-dimensional stochastic processes on $(\varOmega,\mathcal{F},\mathscr{P})$
characterized by the following; $\{\mathscr{P}_{\tau}\}_{\tau\in\mathbb{R}}$
is the family of the composited sub-$\sigma$-algebra $\mathscr{P}_{\tau\in\mathbb{R}}\coloneqq\mathcal{F}_{\tau}\otimes\mathcal{P}_{\tau}\otimes\mathcal{P}_{\tau}\otimes\mathcal{P}_{\tau}\in\mathcal{F}$
and $\{\mathscr{F}_{\tau}\}_{\tau\in\mathbb{R}}$ is the family of
$\mathscr{F}_{\tau\in\mathbb{R}}\coloneqq\mathcal{P}_{\tau}\otimes\mathcal{F}_{\tau}\otimes\mathcal{F}_{\tau}\otimes\mathcal{F}_{\tau}\in\mathcal{F}$.
\end{defn}
\end{leftbar}Consider two measurable spaces $(X,\mathcal{X})$ and
$(Y,\mathcal{Y})$, a $\mathcal{X}/\mathcal{Y}$-measurable mapping
$f$ is a mapping $f:X\rightarrow Y$ satisfying $f^{-1}(A)\coloneqq\{x\in X|f(x)\in A\}\subset\mathcal{X}$
for all $A\in\mathcal{Y}$. By employing $\mathbb{L}_{T}^{p}(E)$
 as a family of $\mathscr{B}(\mathbb{R})\times\mathcal{F}/\mathscr{B}(E)$-measurable
mappings for a topological space $E$, let us introduce $\mathcal{L}_{\mathrm{loc}}^{p}(\{\mathscr{P}_{\tau}\};E)$
and $\mathcal{L}_{\mathrm{loc}}^{p}(\{\mathscr{F}_{\tau}\};E)$ as
the families of stochastic processes as follows:

\[
\mathbb{L}_{T}^{p}(E)\coloneqq\left\{ \hat{X}(\circ,\bullet)\left|\hat{X}(\circ,\bullet):\mathbb{R}\times\varOmega\rightarrow E,\sum_{A}\int_{T\subset\mathbb{R}}|\varphi_{E}^{A}\circ\hat{X}(\tau,\omega)|^{p}d\tau<\infty\,\mathrm{a.s.}\right.\right\} 
\]
\[
\mathcal{L}_{\mathrm{loc}}^{p}(\{\mathscr{P}_{\tau}\};E)\coloneqq\left\{ \left.\hat{X}(\circ,\bullet)\in\mathbb{L}_{(-\infty,\tau]}^{p}(E)\right|\begin{gathered}\hat{X}(\circ,\bullet)\mathrm{\,is\,}\{\mathscr{P}_{\tau}\}\mathchar`-\mathrm{adapted}\end{gathered}
\right\} 
\]
\[
\mathcal{L}_{\mathrm{loc}}^{p}(\{\mathscr{F}_{\tau}\};E)\coloneqq\left\{ \left.\hat{X}(\circ,\bullet)\in\mathbb{L}_{[\tau,\infty)}^{p}(E)\right|\forall\tau\in\mathbb{R},\,\hat{X}^{\mu}(\circ,\bullet)\mathrm{\,is\,}\{\mathscr{F}_{\tau}\}\mathchar`-\mathrm{adapted}\right\} 
\]
\begin{leftbar}
\begin{defn}[$(-g)$-Wiener processes of $W_{+}(\circ,\bullet)$ and $W_{-}(\circ,\bullet)$
\cite{Seto Volume I}]
Let $W_{+}(\circ,\bullet)$ and $W_{-}(\circ,\bullet)$ be $\{\mathscr{P}_{\tau}\}$
and $\{\mathscr{F}_{\tau}\}$-$(-g)$ Wiener processes satisfying
the following for $\mu,\nu=0,1,2,3$:
\begin{equation}
\begin{array}{cc}
\mathbb{E}\left\llbracket \int_{\tau}^{\tau+\delta\tau}dW_{\pm}^{\mu}(\tau',\bullet)\right\rrbracket =0\,, & \mathbb{E}\left\llbracket \int_{\tau}^{\tau+\delta\tau}dW_{\pm}^{\mu}(\tau',\bullet)\times\int_{\tau}^{\tau+\delta\tau}dW_{\pm}^{\mu}(\tau'',\bullet)\right\rrbracket =\delta^{\mu\nu}\times\delta\tau\,,\end{array}
\end{equation}
The It\^{o} formula of a $C^{2}$-function $f:\mathbb{A}^{4}(\mathbb{V_{\mathrm{M}}^{\mathrm{4}}},g)\rightarrow\mathbb{C}$
on $W_{\pm}(\circ,\omega)$ is
\begin{equation}
f(W_{\pm}(\tau_{b},\omega))-f(W_{\pm}(\tau_{a},\omega))=\int_{\tau_{a}}^{\tau_{b}}\partial_{\mu}f(W_{\pm}(\tau,\omega))dW_{\pm}^{\mu}(\tau,\omega)\mp\frac{\lambda^{2}}{2}\int_{\tau_{a}}^{\tau_{b}}\partial_{\mu}\partial^{\mu}f(W_{\pm}(\tau,\omega))d\tau\,\,\mathrm{a.s.}
\end{equation}

\end{defn}
\end{leftbar}The following stochastic differential equation is defined
as a relativistic kinematics of a stochastic scalar electron by employing
$W_{\pm}(\circ,\bullet)$:\begin{leftbar}
\begin{thm}[Kinematics \cite{Seto Volume I}]
For $\left(\mathit{\Omega},\mathcal{F},\mathscr{P}\right)$, ``A
dual-progressively measurable process'', or by shortening ``A D-process''
and ``A D-progressive'' $\hat{x}(\circ,\bullet)\coloneqq\{\hat{x}(\tau,\omega)\in\mathbb{A}^{4}(\mathbb{V_{\mathrm{M}}^{\mathrm{4}}},g)|\tau\in\mathbb{R},\omega\in\varOmega\}$
is a 4-dimensional $\{\mathscr{P}_{\tau}\cap\mathscr{F}_{\tau}\}$-adapted
process, such that
\begin{equation}
\ensuremath{d_{\pm}\hat{x}(\tau,\omega)=\mathcal{V}_{\pm}(\hat{x}(\tau,\omega))d\tau+\lambda\times dW_{\pm}(\tau,\omega)}\,,\label{eq: kinematics}
\end{equation}
where, $\lambda=\sqrt{\hbar/m_{0}}$, $\mathcal{V}_{+}(\hat{x}(\circ,\bullet))\in\mathcal{L}_{\mathrm{loc}}^{1}(\{\mathscr{P}_{\tau}\};\mathbb{V}_{\mathrm{M}}^{\mathrm{4}})$
and $\mathcal{V}_{-}(\hat{x}(\circ,\bullet))\in\mathcal{L}_{\mathrm{loc}}^{1}(\{\mathscr{F}_{\tau}\};\mathbb{V}_{\mathrm{M}}^{\mathrm{4}})$.
For a $C^{2}$-function $f:\mathbb{A}^{4}(\mathbb{V_{\mathrm{M}}^{\mathrm{4}}},g)\rightarrow\mathbb{C}$,
its It\^{o} formula $d_{\pm}f$ is characterized by 
\begin{equation}
d_{\pm}f(\hat{x}(\tau,\omega))=\partial_{\mu}f(\hat{x}(\tau,\omega))d_{\pm}\hat{x}^{\mu}(\tau,\omega)\mp\frac{\lambda^{2}}{2}\partial_{\mu}\partial^{\mu}f(\hat{x}(\tau,\omega))d\tau\,\,\mathrm{a.s.}\label{eq:2-Ito-formula}
\end{equation}
 The probability density of $\hat{x}(\circ,\bullet)$, i.e., $p(x,\tau)$
satisfies the following Fokker-Planck equation:
\begin{equation}
\partial_{\tau}p(x,\tau)+\partial_{\mu}[\mathcal{V}_{\pm}^{\mu}(x)p(x,\tau)]\pm\frac{\lambda^{2}}{2}\partial^{\mu}\partial_{\mu}p(x,\tau)=0\label{eq: Fokker-Planck}
\end{equation}

\end{thm}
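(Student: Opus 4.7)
The final statement aggregates three claims that must be established in turn: the well-definedness of the D-process $\hat{x}$ through the dual SDE (\ref{eq: kinematics}), the It\^{o}-type chain rule (\ref{eq:2-Ito-formula}), and the Fokker-Planck equation (\ref{eq: Fokker-Planck}). My plan is to treat them sequentially, using the $(-g)$-Wiener It\^{o} formula stated in the preceding definition as the workhorse.

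\textbf{Kinematics.} For the ``$+$'' equation, I would view $d_{+}\hat{x}=\mathcal{V}_{+}(\hat{x})\,d\tau+\lambda\,dW_{+}$ as a standard forward It\^{o} SDE on $(\varOmega,\mathcal{F},\mathscr{P})$ relative to the past filtration $\{\mathscr{P}_{\tau}\}$: the hypothesis $\mathcal{V}_{+}(\hat{x}(\circ,\bullet))\in\mathcal{L}_{\mathrm{loc}}^{1}(\{\mathscr{P}_{\tau}\};\mathbb{V}_{\mathrm{M}}^{\mathrm{4}})$ yields pathwise existence of the drift integral together with its $\{\mathscr{P}_{\tau}\}$-adaptedness, while $W_{+}$ is $\{\mathscr{P}_{\tau}\}$-Wiener by construction. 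Symmetrically, $d_{-}\hat{x}=\mathcal{V}_{-}(\hat{x})\,d\tau+\lambda\,dW_{-}$ is $\{\mathscr{F}_{\tau}\}$-adapted. The dual-progressive requirement is precisely what forces both equations to describe the same sample path.

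\textbf{It\^{o} formula.} Substituting the integral form $\hat{x}(\tau,\omega)=\hat{x}(\tau_{a},\omega)+\int_{\tau_{a}}^{\tau}\mathcal{V}_{\pm}(\hat{x}(\tau',\omega))\,d\tau'+\lambda\int_{\tau_{a}}^{\tau}dW_{\pm}(\tau',\omega)$ into the It\^{o} rule for $W_{\pm}$ given in the preceding definition, the finite-variation drift contributes only the first-order term $\partial_{\mu}f\cdot\mathcal{V}_{\pm}^{\mu}\,d\tau$, while the diffusive part $\lambda\,dW_{\pm}$ supplies both the first-order $\partial_{\mu}f\cdot\lambda\,dW_{\pm}^{\mu}$ and the second-order $\mp\frac{\lambda^{2}}{2}\partial_{\mu}\partial^{\mu}f\,d\tau$, the sign and the $g^{\mu\nu}$-contraction being inherited directly from the $(-g)$-Wiener It\^{o} rule. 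Regrouping produces (\ref{eq:2-Ito-formula}).

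\textbf{Fokker-Planck.} The classical weak-form derivation then applies: for any compactly supported $f\in C^{2}(\mathbb{A}^{4}(\mathbb{V}_{\mathrm{M}}^{\mathrm{4}},g))$, write $\mathbb{E}\llbracket f(\hat{x}(\tau,\bullet))\rrbracket=\int f(x)\,p(x,\tau)\,d^{4}x$, differentiate in $\tau$ using (\ref{eq:2-Ito-formula}), kill the $dW_{\pm}$ stochastic integral by its martingale property, and integrate by parts to transfer $\partial_{\mu}$ and $\partial_{\mu}\partial^{\mu}$ from $f$ onto $\mathcal{V}_{\pm}^{\mu}p$ and $p$ respectively. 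Arbitrariness of $f$ yields (\ref{eq: Fokker-Planck}).

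The main obstacle I anticipate lies in the first step: unlike ordinary Nelson stochastic mechanics on Euclidean space, the forward family $\{\mathscr{P}_{\tau}\}$ and the backward family $\{\mathscr{F}_{\tau}\}$ are here genuinely distinct sub-$\sigma$-algebra filtrations, and demanding a single sample path $\hat{x}$ to simultaneously solve both SDEs is a nontrivial compatibility condition between $\mathcal{V}_{+}$ and $\mathcal{V}_{-}$ (implicitly constrained through the common law of $\hat{x}$). This is where the D-progressive existence result from Volume I must be imported wholesale; once that is in place, the It\^{o} formula and the Fokker-Planck equation follow along familiar lines.
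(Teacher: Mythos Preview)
Your sketch is sound, but note that the paper does not prove this theorem at all: it is stated in the introduction inside a \texttt{leftbar} block with the citation \cite{Seto Volume I} and no argument follows. The result is imported wholesale from Volume~I as foundational background, so there is no ``paper's own proof'' to compare against here.

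That said, your outline is a reasonable reconstruction of what the Volume~I argument presumably contains, and you have correctly isolated the genuinely nontrivial point: the simultaneous adaptedness to $\{\mathscr{P}_{\tau}\}$ and $\{\mathscr{F}_{\tau}\}$ and the compatibility of the two drifts $\mathcal{V}_{+},\mathcal{V}_{-}$ through the common law of $\hat{x}$. Your own final paragraph already concedes that this step must be taken from Volume~I, which matches exactly what the present paper does. The It\^{o} formula (\ref{eq:2-Ito-formula}) and the Fokker--Planck equation (\ref{eq: Fokker-Planck}) then follow by the standard chain-rule and weak-form arguments you describe, and the paper likewise treats these as immediate consequences of the kinematics, not as results requiring separate proof here.
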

\end{leftbar}For $\left(\mathit{\Omega},\mathcal{F},\mathscr{P}\right)$,
let $\mathbb{E}\llbracket\hat{X}(\bullet)\rrbracket$ be the expectation
of the random variable $\hat{X}(\bullet)\coloneqq\{\hat{X}(\omega)|\omega\in\varOmega\}$,
namely, $\mathbb{E}\llbracket\hat{X}(\bullet)\rrbracket\coloneqq\int_{\Omega}d\mathscr{P}(\omega)\,\hat{X}(\omega)$.
The conditional probability of $B$ given $A$ is denoted by $\mathscr{P}_{A}(B)$.
For $\{A_{n}\}_{n=1}^{\infty}$ of a countable decomposition of $\varOmega$,
its minimum $\sigma$-algebra $\mathcal{\mathscr{C}}=\sigma(\{A_{n}\}_{n=1}^{\infty})$
is introduced.  $\mathbb{E}\llbracket\hat{X}(\bullet)|\mathcal{\mathscr{C}}\rrbracket(\omega)\coloneqq\sum_{n=1}^{\infty}\{\int_{\varOmega}d\mathscr{P}_{A_{n}}(\omega')\hat{X}(\omega')\}\mathbf{1}_{A_{n}}(\omega)$
is defined as the conditional expectation of $\hat{X}(\bullet)$ given
$\mathcal{\mathscr{C}}\subset\mathcal{F}$; $\mathbf{1}_{A_{n}}(\omega)$
satisfies $\mathbf{1}_{A_{n}}(\omega\in A_{n})=1$ and $\mathbf{1}_{A_{n}}(\omega\notin A_{n})=0$.
Since a D-progressive $\hat{x}(\tau,\omega)$ is $\{\mathscr{P}_{\tau}\cap\mathscr{F}_{\tau}\}$-adapted,
the following complex differential and velocity can be defined;
\begin{defn}[Complex differential and velocity]
\begin{leftbar}Let $\hat{d}\coloneqq(1-i)/2\times d_{+}+(1+i)/2\times d_{-}$
be the complex differential on a given D-progressive $\hat{x}(\circ,\bullet)$
for $f:\mathbb{A}^{4}(\mathbb{V_{\mathrm{M}}^{\mathrm{4}}},g)\rightarrow\mathbb{C}$:
\begin{equation}
\hat{d}f(\hat{x}(\tau,\omega))=\partial_{\mu}f(\hat{x}(\tau,\omega))\hat{d}\hat{x}^{\mu}(\tau,\omega)+\frac{i\lambda^{2}}{2}\partial^{\mu}\partial_{\mu}f(\hat{x}(\tau,\omega))d\tau\,\mathrm{a.s.}
\end{equation}
 Then, consider the conditional expectation of the derivative given
$\gamma_{\tau}\coloneqq\mathscr{P}_{\tau}\cap\mathscr{F}_{\tau}\subset\mathcal{F}$
is denoted by
\begin{equation}
\mathbb{E}\left\llbracket \left.\frac{\hat{d}f}{d\tau}(\hat{x}^{\mu}(\tau,\bullet))\right|\gamma_{\tau}\right\rrbracket (\omega)=\mathscr{\mathcal{V}}^{\mu}(\hat{x}(\tau,\omega))\partial_{\mu}f(\hat{x}(\tau,\omega))+\frac{i\lambda^{2}}{2}\partial^{\mu}\partial_{\mu}f(\hat{x}(\tau,\omega))\,.
\end{equation}
Especially when $f(\hat{x}(\tau,\omega))=\hat{x}(\tau,\omega)$, it
derives the complex velocity $\mathscr{\mathcal{V}}\in\mathbb{V_{\mathrm{M}}^{\mathrm{4}}}\oplus i\mathbb{V_{\mathrm{M}}^{\mathrm{4}}}$,
\begin{equation}
\mathscr{\mathcal{V}}^{\mu}(\hat{x}(\tau,\omega))\coloneqq\mathbb{E}\left\llbracket \left.\frac{\hat{d}\hat{x}^{\mu}}{d\tau}(\tau,\bullet)\right|\gamma_{\tau}\right\rrbracket (\omega)=\frac{1-i}{2}\mathcal{V}_{+}^{\mu}(\hat{x}(\tau,\omega))+\frac{1+i}{2}\mathcal{V}_{-}^{\mu}(\hat{x}(\tau,\omega))\,.\label{eq: complex V}
\end{equation}
\end{leftbar}
\end{defn}
The following signatures are employed for the simple description:
\begin{equation}
\hat{\mathcal{V}}^{\mu}(x)\coloneqq\mathcal{V}^{\mu}(x)+\frac{i\lambda^{2}}{2}\partial^{\mu}\label{eq: OP of complex V}
\end{equation}
\begin{equation}
\mathfrak{D_{\tau}}=\hat{\mathcal{V}}^{\mu}(x)\partial_{\mu}\label{eq: complex D_tau}
\end{equation}
 \begin{leftbar}
\begin{thm}[Dynamics \cite{Seto Volume I}]
\label{Quantum_Dynamics}The dynamics of a stochastic scalar electron
and a field coupled with a kinematics of (\ref{eq: kinematics}) is
as follows:\begin{snugshade}
\begin{equation}
m_{0}\mathfrak{D_{\tau}}\mathcal{V}^{\mu}(\hat{x}(\tau,\omega))=-e\mathcal{\hat{V}}_{\nu}(\hat{x}(\tau,\omega))F^{\mu\nu}(\hat{x}(\tau,\omega))\label{eq:sug-KG eq}
\end{equation}
\begin{equation}
\partial_{\mu}\left[F^{\mu\nu}(x)+\delta f^{\mu\nu}(x)\right]=\mu_{0}\times\mathbb{E}\left\llbracket -ec\int_{\mathbb{R}}d\tau'\,\mathrm{Re}\left\{ \mathcal{V}^{\nu}(x)\right\} \delta^{4}(x-\hat{x}(\tau',\bullet))\right\rrbracket \label{eq:sug-Maxwell}
\end{equation}
\end{snugshade}

\noindent Where, (\ref{eq:sug-KG eq}) is equivalent to the Klein-Gordon
equation via $\mathcal{V}^{\mu}(x)\coloneqq i\lambda^{2}\times\partial{}^{\mu}\ln\phi(x)+e/m_{0}\times A{}^{\mu}(x)$
and (\ref{eq:sug-Maxwell}) is the Maxwell equation. Let $F\in\mathbb{V_{\mathrm{M}}^{\mathrm{4}}}\otimes\mathbb{V_{\mathrm{M}}^{\mathrm{4}}}$
be a homogeneous solution of (\ref{eq:sug-Maxwell}) and $\delta f\in\mathbb{V_{\mathrm{M}}^{\mathrm{4}}}\otimes\mathbb{V_{\mathrm{M}}^{\mathrm{4}}}$
be a singularity of the field. The dynamics of (\ref{eq:sug-KG eq})
and (\ref{eq:sug-Maxwell}) satisfy the $U(1)$-gauge symmetry. 
\end{thm}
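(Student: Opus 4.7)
The plan is to verify the three claims in turn: (i) equation~(\ref{eq:sug-KG eq}) reduces algebraically to the covariant Klein--Gordon equation once the ansatz $\mathcal{V}^\mu(x) = i\lambda^2 \partial^\mu \ln\phi(x) + (e/m_0) A^\mu(x)$ is inserted; (ii) the expectation appearing as the source of~(\ref{eq:sug-Maxwell}) equals, under the same ansatz, the conventional conserved Klein--Gordon four-current, so that~(\ref{eq:sug-Maxwell}) is the standard sourced Maxwell equation once the singular self-field is isolated into $\delta f$; and (iii) both equations are preserved under a common $U(1)$ transformation.

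For (i), I would expand the left-hand side through $\mathfrak{D}_\tau \mathcal{V}^\mu = \hat{\mathcal{V}}^\nu \partial_\nu \mathcal{V}^\mu = \mathcal{V}^\nu \partial_\nu \mathcal{V}^\mu + (i\lambda^2/2)\,\partial^\nu \partial_\nu \mathcal{V}^\mu$ and split the bilinear via $\mathcal{V}^\nu \partial_\nu \mathcal{V}^\mu = \tfrac{1}{2}\partial^\mu(\mathcal{V}_\nu \mathcal{V}^\nu) + \mathcal{V}_\nu(\partial^\nu \mathcal{V}^\mu - \partial^\mu \mathcal{V}^\nu)$. Because partials on $\ln\phi$ commute, the curl $\partial^\nu \mathcal{V}^\mu - \partial^\mu \mathcal{V}^\nu$ reduces to $(e/m_0)(\partial^\nu A^\mu - \partial^\mu A^\nu)=(e/m_0) F^{\nu\mu}$; combining this with the analogous antisymmetric contribution generated by $(i\lambda^2/2)\partial^\nu \partial_\nu \mathcal{V}^\mu$ reproduces exactly $-e \hat{\mathcal{V}}_\nu F^{\mu\nu}/m_0$ on the right-hand side. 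The surviving symmetric part condenses into $\partial^\mu \bigl[\tfrac{1}{2} m_0 \mathcal{V}_\nu \mathcal{V}^\nu + (i\hbar/2)\,\partial_\nu \mathcal{V}^\nu \bigr]=0$; fixing the integration constant by the mass-shell condition and multiplying by $\phi$ yields the minimally-coupled $[(i\hbar \partial_\mu - eA_\mu)(i\hbar \partial^\mu - eA^\mu) - m_0^2 c^2]\phi = 0$.

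For (ii), writing $\phi = R\,e^{iS}$ gives $\mathrm{Re}\{\mathcal{V}^\mu\} = -\lambda^2 \partial^\mu S + (e/m_0) A^\mu$, the Gordon/Madelung flow velocity. The Fokker--Planck equation of Volume~I admits $|\phi|^2$ as stationary density, so the expectation $\mathbb{E}\llbracket -ec\int d\tau'\,\mathrm{Re}\{\mathcal{V}^\nu\}\delta^4(x-\hat{x}(\tau',\bullet))\rrbracket$ collapses to $-ec\,|\phi(x)|^2\,\mathrm{Re}\{\mathcal{V}^\nu(x)\}$, the standard conserved Klein--Gordon current, which establishes~(\ref{eq:sug-Maxwell}) as the physical Maxwell equation. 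For (iii), under $A^\mu \mapsto A^\mu + \partial^\mu \Lambda$ with the compensating phase $\phi \mapsto e^{ie\Lambda/\hbar}\phi$, the two pieces of $\mathcal{V}^\mu$ receive opposite shifts ($i\lambda^2 \partial^\mu \ln\phi$ acquires $-(e/m_0)\partial^\mu \Lambda$, while $(e/m_0)A^\mu$ acquires $+(e/m_0)\partial^\mu \Lambda$); hence $\mathcal{V}^\mu$, $F^{\mu\nu}$ and $|\phi|^2$ are all invariant, and so is the system~(\ref{eq:sug-KG eq})--(\ref{eq:sug-Maxwell}).

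The main obstacle is the bookkeeping in (i): one must show that the It\^o correction $(i\lambda^2/2)\partial^\nu \partial_\nu \mathcal{V}^\mu$ combines cleanly with the symmetric part $\mathcal{V}^\nu \partial^\mu \mathcal{V}_\nu$ into a pure $\partial^\mu$-gradient whose integrand equals the Klein--Gordon operator applied to $\phi$ divided by $\phi$, while the residual pieces reassemble into $\hat{\mathcal{V}}_\nu F^{\mu\nu}$ on the right. Tracking the $\hbar$-corrections consistently, together with invoking a mass-shell/classical-limit argument to fix the scalar constant at $-\tfrac{1}{2}m_0 c^2$, is where the stochastic quantization proposal is actually tested; everything else reduces to standard algebra and gauge invariance.
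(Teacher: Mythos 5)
This theorem is imported verbatim from Volume I and the present paper contains no proof of it, so there is no in-paper argument to compare against; your verification is correct and is exactly the route the paper's definitions invite: expand $\mathfrak{D}_{\tau}\mathcal{V}^{\mu}=\mathcal{V}^{\nu}\partial_{\nu}\mathcal{V}^{\mu}+\tfrac{i\lambda^{2}}{2}\partial^{\nu}\partial_{\nu}\mathcal{V}^{\mu}$, split off the curl $(e/m_{0})F^{\nu\mu}$, absorb the It\^o term into $\partial^{\mu}(\partial_{\nu}\mathcal{V}^{\nu})$ plus $\partial_{\nu}F^{\nu\mu}$ (which is what supplies the $\tfrac{i\lambda^{2}}{2}\partial_{\nu}$ piece of $\hat{\mathcal{V}}_{\nu}$ on the right), integrate the surviving gradient to the mass-shell constant to recover the Klein--Gordon equation, and check gauge invariance and the current identification as you do. Two convention slips are worth fixing but do not affect the argument: with the paper's signs ($\mathcal{V}^{\mu}$ containing $+\tfrac{e}{m_{0}}A^{\mu}$ and the force written as $-eF^{\mu\nu}v_{\nu}$) the resulting operator is $(i\hbar\partial_{\mu}+eA_{\mu})(i\hbar\partial^{\mu}+eA^{\mu})\phi=m_{0}^{2}c^{2}\phi$ rather than $-eA_{\mu}$, and the Bernoulli constant fixed by $\mathbb{E}\llbracket\mathcal{V}_{\alpha}^{*}\mathcal{V}^{\alpha}\rrbracket=c^{2}$ is $+\tfrac{1}{2}m_{0}c^{2}$, not $-\tfrac{1}{2}m_{0}c^{2}$.
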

\noindent \end{leftbar}

By using those ideas, a new parameter is found in Larmor's formula
instead of $q(\chi)$ of (\ref{eq: Q=000026C Larmor}-\ref{eq: q_chi}).
\\

In {\bf Section \ref{Radiation}}, the retarded field $\mathcal{F}_{(\mathring{+})}$
and the advanced field $\mathcal{F}_{(\mathring{-})}$ are derived
from the Maxwell equation (\ref{eq:sug-Maxwell}) by its direct calculation.
Then, the effective radiation reaction field $\mathfrak{F}(\hat{x}(\tau,\omega))$
corresponding to $F_{\mathrm{LAD}}$ of (\ref{eq: F_LAD}) is defined.
Where, the key issue is the treatment of the current density in (\ref{eq:sug-Maxwell});
\[
j_{\mathrm{stochastic}}(x)=\mathbb{E}\left\llbracket -ec\int_{\mathbb{R}}d\tau'\,\mathrm{Re}\left\{ \mathcal{V}(x)\right\} \delta^{4}(x-\hat{x}(\tau',\bullet))\right\rrbracket \in\mathbb{V}_{\mathrm{M}}^{4}\,.
\]

Then, the characteristics of the present model is discussed in {\bf Section \ref{Sect-Summary}}
as the summary. Ehrenfest's theorem is applied to explain the transition
between quantum dynamics to classical dynamics and the Landau-Lifshitz
approximation for the real calculation. By resulting it, we can find
a simple radiation formula corresponding to (\ref{eq: Q=000026C Larmor}):\begin{snugshade}
\[
\frac{dW}{dt}(\mathbb{\mathbb{E}}\llbracket\hat{x}(\tau,\bullet)\rrbracket)=\int_{\omega\in\varOmega_{\tau}^{\mathrm{ave}}}d\mathscr{P}(\omega)\,\frac{dW_{\mathrm{classical}}}{dt}(\hat{x}(\tau,\omega))
\]
\end{snugshade}

\section{Radiation fields\label{Radiation}}

Let us consider how to formalize a radiation field by solving the
Maxwell equation (\ref{eq:sug-Maxwell}). By recalling {\bf Theorem \ref{Classical_dynamics}}
and {\bf Lemma \ref{lemma_F_LAD}}, the retarded field $F_{\mathrm{ret}}$
and advanced field $F_{\mathrm{adv}}$ are derived from the Maxwell
equation (\ref{eq: classical Maxwell}) in classical physics. $F_{\mathrm{ret}}$
is separated into its homogeneous solution and its singularity at
$x=x(\tau)$:
\begin{equation}
\frac{F_{\mathrm{ret}}(x(\tau))-F_{\mathrm{adv}}(x(\tau))}{2}=-\frac{m_{0}\tau_{0}}{ec^{2}}\left[\frac{d^{2}v}{d\tau^{2}}(\tau)\otimes v(\tau)-v(\tau)\otimes\frac{d^{2}v}{d\tau^{2}}(\tau)\right]\label{eq: classical-hom}
\end{equation}
 
\begin{equation}
\frac{F_{\mathrm{ret}}(x(\tau))+F_{\mathrm{adv}}(x(\tau))}{2}=\frac{3}{4}\frac{m_{0}\tau_{0}}{ec^{2}}\left[\frac{dv}{d\tau}(\tau)\otimes v(\tau)-v(\tau)\otimes\frac{dv}{d\tau}(\tau)\right]\int_{\mathbb{R}}d\tau'\frac{\delta(\tau'-\tau)}{|\tau'-\tau|}\label{eq: classical-singular}
\end{equation}
The singularity of $\delta f\coloneqq(F_{\mathrm{ret}}+F_{\mathrm{adv}})/2$
is known as the self-interaction of an electron due to its electromagnetic
mass $m_{\mathrm{EM}}\coloneqq3/4\times m_{0}\tau_{0}\int_{\mathbb{R}}d\tau'\delta(\tau'-\tau)/|\tau'-\tau|=\infty$
not to depend on an external field $F_{\mathrm{ex}}$ \cite{Dirac(1938a),Hartemann(2001)}.
Thus, let us regard $\delta f$ as a Coulomb field attached to an
electron. The field $F_{\mathrm{LAD}}\coloneqq(F_{\mathrm{ret}}-F_{\mathrm{adv}})/2$
of its rest denotes the LAD field (the effective field), such that
$\partial_{\mu}F_{\mathrm{LAD}}^{\mu\nu}=0$ for all $x\in\mathbb{A}^{4}(\mathbb{V_{\mathrm{M}}^{\mathrm{4}}},g)$.
Consider the one in stochastic dynamics of a scalar electron. {\bf Theorem \ref{Quantum_Dynamics}}
and following {\bf Lemma \ref{F_LAD_stochastic}}'' are ones in the
present model corresponding to ''{\bf Theorem \ref{Classical_dynamics}}
and {\bf Lemma \ref{lemma_F_LAD}}''.

\subsection{Derivation of fields: general idea}

\noindent \begin{leftbar}
\begin{lem}[Radiation fields]
\noindent \label{F_LAD_stochastic}Consider the Maxwell equation
\begin{equation}
\partial_{\mu}\mathcal{F}^{\mu\nu}(x)=\mu_{0}\times\mathbb{E}\left\llbracket -ec\int_{\mathbb{R}}d\tau\,\mathrm{Re}\left\{ \mathcal{V}^{\nu}(x)\right\} \delta^{4}(x-\hat{x}(\tau,\bullet))\right\rrbracket \,,\label{eq: Maxwell_for_derivation}
\end{equation}
let $\mathcal{F}_{(\mathring{+})}\in\mathbb{V}_{\mathrm{M}}^{\mathrm{4}}\otimes\mathbb{V}_{\mathrm{M}}^{\mathrm{4}}$
and $\mathcal{F}_{(\mathring{-})}\in\mathbb{V}_{\mathrm{M}}^{\mathrm{4}}\otimes\mathbb{V}_{\mathrm{M}}^{\mathrm{4}}$
be the retarded and advanced fields as the linear independent solutions
of (\ref{eq: Maxwell_for_derivation}). The particular solution of
(\ref{eq: Maxwell_for_derivation}) is derived as follows:
\begin{eqnarray}
\mathcal{F}(x) & = & F_{\mathrm{ex}}(x)+\mathcal{F}_{(\mathring{+})}(x)\nonumber \\
 & = & \mathring{+}\left[F_{\mathrm{ex}}(x)+\mathfrak{F}(x)\right]+\delta\mathfrak{f}(x)\label{eq: relation of radiation field}
\end{eqnarray}
Where, $\mathcal{F}_{(\mathring{+})}$ is separated into the effective
radiation (reaction) field $\mathfrak{F}$ such that $\partial_{\mu}\mathfrak{F}^{\mu\nu}=0$
and its singularity $\delta\mathfrak{f}\in\mathbb{V}_{\mathrm{M}}^{\mathrm{4}}\otimes\mathbb{V}_{\mathrm{M}}^{\mathrm{4}}$. 
\end{lem}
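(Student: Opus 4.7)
The plan is to mirror Dirac's classical construction recalled in Lemma \ref{lemma_F_LAD}, carried through for the expected stochastic current on the right-hand side of (\ref{eq: Maxwell_for_derivation}). First I would invert the wave operator in Lorenz gauge using the retarded and advanced Green's functions of the d'Alembertian; the two particular solutions $\mathcal{F}_{(\mathring{\pm})}$ are then the convolution of these Green's functions with
\[
j_{\mathrm{stochastic}}^{\nu}(x) = \mathbb{E}\left\llbracket -ec\int_{\mathbb{R}} d\tau\,\mathrm{Re}\{\mathcal{V}^{\nu}(x)\}\,\delta^{4}(x-\hat{x}(\tau,\bullet))\right\rrbracket.
\]
The key manoeuvre is to swap this convolution with $\mathbb{E}\llbracket\cdot\rrbracket$ via Fubini, which is legitimate because $\mathcal{V}_{\pm}$ lie in $\mathcal{L}_{\mathrm{loc}}^{1}$ by the kinematics theorem (so $\mathrm{Re}\{\mathcal{V}\}$ does too). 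This produces $\mathcal{F}_{(\mathring{\pm})}$ as the $\mathscr{P}$-expectation of the classical sample-path expressions of Lemma \ref{lemma_F_LAD}, now evaluated along Brownian trajectories $\hat{x}(\tau,\omega)$ rather than a deterministic $C^{2}$ worldline.

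Next I would decompose
\[
\mathcal{F}_{(\mathring{+})} = \tfrac{1}{2}\bigl(\mathcal{F}_{(\mathring{+})}-\mathcal{F}_{(\mathring{-})}\bigr) + \tfrac{1}{2}\bigl(\mathcal{F}_{(\mathring{+})}+\mathcal{F}_{(\mathring{-})}\bigr),
\]
identifying the antisymmetric part with the effective radiation reaction field $\mathfrak{F}$ and the symmetric part with the singularity $\delta\mathfrak{f}$. Since $\mathcal{F}_{(\mathring{+})}-\mathcal{F}_{(\mathring{-})}$ is the difference of two particular solutions of the same inhomogeneous equation, it solves the source-free Maxwell equation, which gives $\partial_{\mu}\mathfrak{F}^{\mu\nu}=0$ immediately. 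The symmetric combination inherits all short-distance divergence at the coincidence $x=\hat{x}(\tau,\omega)$, exactly in the role of the classical Coulomb-type singularity $\delta f$ of (\ref{eq: classical-singular}). Adding $F_{\mathrm{ex}}$, which itself satisfies the homogeneous Maxwell equation by Theorem \ref{Classical_dynamics}, to $\mathcal{F}_{(\mathring{+})}$ produces the general particular solution of (\ref{eq: Maxwell_for_derivation}), and the two equalities in (\ref{eq: relation of radiation field}) follow.

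The hard part will be controlling the short-distance singularity inside the expectation. Classically, this divergence is packaged in the formal integral $\int d\tau'\,\delta(\tau'-\tau)/|\tau'-\tau|$ of Lemma \ref{lemma_F_LAD}, extractable sample-path by sample-path along a smooth worldline. Here the trajectories $\hat{x}(\circ,\omega)$ are only H\"older-$1/2$ in $\tau$, so the Green's-function convolution must be treated distributionally and the coincidence limit must be taken before averaging over $\omega$. A second subtlety is that the It\^o correction $i\lambda^{2}/2\times\partial^{\mu}$ hidden in $\mathcal{V}$ through (\ref{eq: OP of complex V}) contributes a diffusive term absent in the classical derivation; the natural consistency check is that the $\lambda\to 0$ limit of the resulting $\mathfrak{F}$ and $\delta\mathfrak{f}$ recovers Lemma \ref{lemma_F_LAD} exactly.
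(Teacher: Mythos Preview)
Your half-difference/half-sum decomposition does establish the bare existence claim in (\ref{eq: relation of radiation field}): $\mathcal{F}_{(\mathring{+})}-\mathcal{F}_{(\mathring{-})}$ is automatically source-free, and the remainder can be declared the singularity. But the paper takes a genuinely different route, and the difference matters for everything downstream.

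The paper does \emph{not} set $\mathfrak{F}=\tfrac{1}{2}(\mathcal{F}_{(\mathring{+})}-\mathcal{F}_{(\mathring{-})})$. After writing $\mathcal{A}_{(\mathring{\pm})}$ as a Green's-function convolution with $j_{\mathrm{stochastic}}$ exactly as you propose, it first splits the probability measure via $\varOmega=\varOmega_{\tau}^{\mathrm{ave}}\cup(\varOmega\setminus\varOmega_{\tau}^{\mathrm{ave}})$, where $\varOmega_{\tau}^{\mathrm{ave}}=\{\omega:\hat{x}(\tau,\omega)=\mathbb{E}\llbracket\hat{x}(\tau,\bullet)\rrbracket\}$ (Lemma~\ref{sep_probable}). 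The contribution from sample paths off the expected trajectory is argued to be a self-interaction with no net energy--momentum transfer and is absorbed wholesale into $\delta\mathfrak{f}$---including its half-difference piece. Only the $\varOmega_{\tau}^{\mathrm{ave}}$ part is kept as the ``effective'' potential $\mathcal{A}_{(\mathrm{effective},\mathring{\pm})}$, and the Dirac-style coincidence-limit expansion is then carried out along the smooth curve $\tau\mapsto\mathbb{E}\llbracket\hat{x}(\tau,\bullet)\rrbracket$, with Lemma~\ref{semi-classical-relation} supplying the replacement of $\mathrm{Re}\{\mathcal{V}\}$ by $d\mathbb{E}\llbracket\hat{x}\rrbracket/d\tau$. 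The resulting $\mathfrak{F}$ (Lemmas~\ref{F_average}--\ref{F_full}) carries an explicit factor $\mathscr{P}(\varOmega_{\tau}^{\mathrm{ave}})$, which is the paper's proposed stochastic analogue of the QED quantumness factor $q(\chi)$.

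This buys two things your approach does not. First, the regularity obstacle you correctly flag as ``the hard part'' disappears: the Taylor expansion of $G_{(\mathring{\pm})}$ near coincidence is done on the expected trajectory, which is differentiable, never on individual H\"older-$1/2$ Brownian paths. Second---and this is the physical content---your $\mathfrak{F}$ would be the full $\mathscr{P}$-average of the sample-path half-difference and would not isolate the weight $\mathscr{P}(\varOmega_{\tau}^{\mathrm{ave}})$ that drives Conclusion~\ref{conclusion_1} and the radiation formula. So your decomposition proves the lemma as literally stated, but produces a different object $\mathfrak{F}$ from the one the paper actually constructs and uses thereafter. (Your side remark about the It\^o correction in $\hat{\mathcal{V}}$ is not in play here: the Maxwell source involves only $\mathrm{Re}\{\mathcal{V}\}$, not $\hat{\mathcal{V}}$.)
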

\noindent \end{leftbar}

\noindent (\ref{eq: Maxwell_for_derivation}) is analyzed by the following
decomposition:
\begin{equation}
\partial_{\mu}\mathcal{F}_{(\mathring{\pm})}^{\mu\nu}(x)=\mu_{0}j_{\mathrm{stochastic}}^{\nu}(x)\label{eq: Maxwell equation for solving}
\end{equation}
\begin{equation}
\mathcal{F}_{(\mathring{\pm})}^{\mu\nu}(x)\coloneqq\partial^{\mu}\mathcal{A}_{(\mathring{\pm})}^{\nu}(x)-\partial^{\nu}\mathcal{A}_{(\mathring{\pm})}^{\mu}(x)\label{eq: def-F}
\end{equation}
\begin{equation}
j_{\mathrm{stochastic}}^{\nu}(x)\coloneqq-ec\int_{\mathbb{R}}d\tau'\mathbb{E}\left\llbracket \mathrm{Re}\left\{ \mathcal{V}^{\nu}(x)\right\} \delta^{4}(x-\hat{x}(\tau,\bullet))\right\rrbracket \label{eq: j-stochastic}
\end{equation}
By selecting the Lorenz gauge $\partial_{\mu}\mathcal{A}_{(\mathring{\pm})}^{\mu}(x)=0$,
\begin{equation}
\partial_{\mu}\partial^{\mu}\mathcal{A}_{(\mathring{\pm})}^{\nu}(x)=\mu_{0}j_{\mathrm{stochastic}}^{\nu}(x)\,.\label{eq: Maxwell for A}
\end{equation}
Let us introduce the Green function $G_{\mathrm{(\mathring{\pm})}}(x,x')=\theta(\mathring{\pm}\Delta x^{0})/2\pi\times\delta(\Delta x_{\mu}\Delta x^{\mu})$
satisfying $\partial_{\alpha}\partial^{\alpha}G_{\mathrm{(\mathring{\pm})}}(x,x')=\delta^{4}(x-x')$
with the help of the notation $\Delta x\coloneqq x-x'$ and the unit
step function $\theta$; $\theta(x\geq0)=1$ and $\theta(x<0)=0$.
Thus, the solution of (\ref{eq: Maxwell for A}) is
\begin{equation}
\mathcal{A}_{(\mathring{\pm})}^{\nu}(x)=-ec\mu_{0}\int_{\mathbb{R}}d\tau'\,\mathbb{E}\left\llbracket \mathrm{Re}\left\{ \mathcal{V}^{\nu}(\hat{x}(\tau',\bullet))\right\} G_{\mathrm{(\mathring{\pm})}}(x,\hat{x}(\tau',\bullet))\right\rrbracket 
\end{equation}

\noindent and its field strength $\mathcal{F}_{(\mathring{\pm})}^{\mu\nu}$
by (\ref{eq: def-F}) becomes 
\begin{equation}
\mathcal{F}_{(\mathring{\pm})}^{\mu\nu}(x)=-\frac{e}{c\varepsilon_{0}}\int_{\mathbb{R}}d\tau'\,\mathbb{E}\left\llbracket \left(\mathrm{Re}\left\{ \mathcal{V}^{\nu}(\hat{x}(\tau',\bullet))\right\} \partial^{\mu}-\mathrm{Re}\left\{ \mathcal{V}^{\nu}(\hat{x}(\tau',\bullet))\right\} \partial^{\mu}\right)G_{\mathrm{(\mathring{\pm})}}(x,\hat{x}(\tau',\bullet))\right\rrbracket \,.\label{eq: field 1}
\end{equation}
Radiation reaction requires the formulation of $\mathcal{F}_{(\mathring{\pm})}(\hat{x}(\tau,\omega))$.
We consider a decomposition of its first two leading-order terms for
an easier analysis of $G_{\mathrm{(\mathring{\pm})}}(x,\hat{x}(\tau',\bullet))|_{x=\hat{x}(\tau,\omega)}$:
\begin{equation}
\mathcal{F}_{(\mathring{\pm})}(\hat{x}(\tau,\omega))=\mathcal{F}_{(\mathring{\pm})}(\mathbb{E}\llbracket\hat{x}(\tau,\bullet)\rrbracket)+\delta\hat{x}^{\alpha}(\tau,\omega)\cdot\partial_{\alpha}\mathcal{F}_{(\mathring{\pm})}(\mathbb{E}\llbracket\hat{x}(\tau,\bullet)\rrbracket)+O(\otimes^{2}\delta\hat{x}(\tau,\omega))\label{eq: field-expansion}
\end{equation}
The details of the each terms of $\mathcal{F}_{(\mathring{\pm})}(\mathbb{E}\llbracket\hat{x}(\tau,\bullet)\rrbracket)$
and $\delta\hat{x}^{\alpha}(\tau,\omega)\cdot\partial_{\alpha}\mathcal{F}_{(\mathring{\pm})}(\mathbb{E}\llbracket\hat{x}(\tau,\bullet)\rrbracket)$
are discussed in the following small section.

\subsection{Radiation fields}

\noindent Consider $\mathcal{A}_{(\mathring{\pm})}(\mathbb{E}\llbracket\hat{x}(\tau,\bullet)\rrbracket)$
as the leading order term of $\mathcal{A}_{(\mathring{\pm})}(\hat{x}(\tau,\omega))$. 
\begin{lem}
\begin{leftbar}\label{sep_probable}For $\varOmega_{\tau}^{\mathrm{ave}}\coloneqq\{\omega|\hat{x}(\tau,\omega)=\mathbb{E}\llbracket\hat{x}(\tau,\bullet)\rrbracket\}\subset\varOmega$,
\begin{equation}
p(x,\tau)=\mathscr{P}(\varOmega_{\tau}^{\mathrm{ave}})\times\delta^{4}(x-\mathbb{E}\llbracket\hat{x}(\tau,\bullet)\rrbracket)+\int_{\varOmega\backslash\varOmega_{\tau}^{\mathrm{ave}}}d\mathscr{P}(\omega)\delta^{4}(x-\hat{x}(\tau,\omega))\,.
\end{equation}
\end{leftbar}
\end{lem}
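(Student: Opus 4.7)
The plan is to express the probability density $p(x,\tau)$ of the D-progressive $\hat{x}(\circ,\bullet)$ as an expectation of a Dirac distribution and then split the integration over $\varOmega$ into two disjoint pieces corresponding to the event $\{\omega:\hat{x}(\tau,\omega)=\mathbb{E}\llbracket\hat{x}(\tau,\bullet)\rrbracket\}$ and its complement. Concretely, first I would recall that, since $\hat{x}(\tau,\bullet):\varOmega\to\mathbb{A}^{4}(\mathbb{V_{\mathrm{M}}^{\mathrm{4}}},g)$ is $\mathcal{F}/\mathscr{B}(\mathbb{A}^{4}(\mathbb{V_{\mathrm{M}}^{\mathrm{4}}},g))$-measurable and satisfies the Fokker--Planck equation stated in the kinematics theorem, the density at parameter time $\tau$ admits the representation
\[
p(x,\tau)=\mathbb{E}\llbracket\delta^{4}(x-\hat{x}(\tau,\bullet))\rrbracket=\int_{\varOmega}d\mathscr{P}(\omega)\,\delta^{4}(x-\hat{x}(\tau,\omega)).
\]
This is the standard identification of the push-forward density with the expectation of a delta distribution, which I would justify by testing against an arbitrary $f\in C_{c}^{\infty}(\mathbb{A}^{4}(\mathbb{V_{\mathrm{M}}^{\mathrm{4}}},g))$ and invoking Fubini.

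Next I would verify that $\varOmega_{\tau}^{\mathrm{ave}}\in\mathcal{F}$, which follows from the measurability of $\omega\mapsto\hat{x}(\tau,\omega)-\mathbb{E}\llbracket\hat{x}(\tau,\bullet)\rrbracket$ and the fact that the singleton $\{0\}\subset\mathbb{V}_{\mathrm{M}}^{4}$ is Borel; this guarantees the decomposition $\varOmega=\varOmega_{\tau}^{\mathrm{ave}}\sqcup(\varOmega\setminus\varOmega_{\tau}^{\mathrm{ave}})$ is a partition into measurable sets, so
\[
p(x,\tau)=\int_{\varOmega_{\tau}^{\mathrm{ave}}}d\mathscr{P}(\omega)\,\delta^{4}(x-\hat{x}(\tau,\omega))+\int_{\varOmega\setminus\varOmega_{\tau}^{\mathrm{ave}}}d\mathscr{P}(\omega)\,\delta^{4}(x-\hat{x}(\tau,\omega)).
\]
On $\varOmega_{\tau}^{\mathrm{ave}}$ the argument of the delta is $\omega$-independent, hence pulls out of the integral, leaving $\mathscr{P}(\varOmega_{\tau}^{\mathrm{ave}})\,\delta^{4}(x-\mathbb{E}\llbracket\hat{x}(\tau,\bullet)\rrbracket)$, which yields the asserted formula.

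The only genuinely delicate point is the handling of $\delta^{4}$ inside the $\mathscr{P}$-integral: these expressions have to be read distributionally, and the rearrangement above is legitimate precisely because the integrand is the same distribution-valued function on the two disjoint events, so I would interpret the identity as one between tempered distributions in $x$ and verify it by pairing against a test function $f$ and invoking Fubini once more. Everything else is bookkeeping; no further stochastic-analysis machinery is required beyond measurability of $\hat{x}(\tau,\bullet)$ and of the level set defining $\varOmega_{\tau}^{\mathrm{ave}}$.
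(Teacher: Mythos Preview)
Your argument is correct and is exactly the natural one: represent $p(x,\tau)=\mathbb{E}\llbracket\delta^{4}(x-\hat{x}(\tau,\bullet))\rrbracket$, split $\varOmega=\varOmega_{\tau}^{\mathrm{ave}}\sqcup(\varOmega\setminus\varOmega_{\tau}^{\mathrm{ave}})$, and pull the constant delta out of the first piece. The paper does not supply a proof of this lemma at all; it states the identity and immediately uses it to decompose $\mathcal{A}_{(\mathring{\pm})}(x)$, so your write-up fills a gap rather than diverging from any argument of the author's.
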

In order to this lemma, $\mathcal{A}_{(\mathring{\pm})}(x)$ is separated
into two terms:
\begin{eqnarray}
\mathcal{A}_{(\mathring{\pm})}^{\mu}(x) & = & -\frac{e}{c\varepsilon_{0}}\int_{\mathbb{R}}d\tau'\mathscr{P}(\varOmega_{\tau'}^{\mathrm{ave}})\mathrm{Re}\left\{ \mathcal{V}^{\nu}(\mathbb{E}\llbracket\hat{x}(\tau',\bullet)\rrbracket)\right\} G_{\mathrm{(\mathring{\pm})}}(x,\mathbb{E}\llbracket\hat{x}(\tau',\bullet)\rrbracket)\nonumber \\
 &  & -\frac{e}{c\varepsilon_{0}}\int_{\mathbb{R}}d\tau'\int_{\varOmega\backslash\varOmega_{\tau'}^{\mathrm{ave}}}d\mathscr{P}(\omega')\mathrm{Re}\left\{ \mathcal{V}^{\nu}(\hat{x}(\tau',\omega'))\right\} G_{\mathrm{(\mathring{\pm})}}(x,\hat{x}(\tau',\omega'))
\end{eqnarray}
$G_{\mathrm{(\mathring{\pm})}}(x,\hat{x}(\tau',\omega'))$ includes
the light-cone condition via $\delta\left([x_{\mu}-\hat{x}_{\mu}(\tau',\omega')]\cdot[x^{\mu}-\hat{x}^{\mu}(\tau',\omega')]\right)$
in its definition. Therefore, the substitution $x=\mathbb{E}\llbracket\hat{x}(\tau,\bullet)\rrbracket$
for the derivation of $\mathcal{A}_{(\mathring{\pm})}(\mathbb{E}\llbracket\hat{x}(\tau,\bullet)\rrbracket)$
imposes
\begin{equation}
\left[\mathbb{E}\llbracket\hat{x}_{\mu}(\tau,\bullet)\rrbracket-\hat{x}_{\mu}(\tau',\omega')\right]\cdot\left[\mathbb{E}\llbracket\hat{x}^{\mu}(\tau,\bullet)\rrbracket-\hat{x}^{\mu}(\tau',\omega')\right]=0\,.\label{eq: light-cone}
\end{equation}
We regard this equation as an electron emits a field at $\hat{x}(\tau',\omega')$
and the both of the electron and the field interact at $\mathbb{E}\llbracket\hat{x}(\tau,\bullet)\rrbracket$
after their propagations. Consider the case $\omega'\in\varOmega\backslash\varOmega_{\tau'}^{\mathrm{ave}}$.
In this case, $\hat{x}(\tau',\omega')$ and $\mathbb{E}\llbracket\hat{x}(\tau,\bullet)\rrbracket$
locate at the separated points in the Minkowski spacetime even if
$\tau'=\tau$. Some of $(\tau',\omega')$ may hold the light-cone
condition of (\ref{eq: light-cone}). It expresses the self-interaction
with its finite size correction\footnote{In this diagram, the wiggly line and double line represent a photon
and an electron dressing an external field like a laser. The Green
function regulates the propagation of a photon and its two indexes
are the positions on a trajectory of an electron.}: 

\begin{singlespace}
\noindent \begin{center}
\begin{fmffile}{diagram}
$(\mathrm{self\mathchar`-interaction})=\begin{gathered}
\begin{fmfgraph}(45,15)
  \fmfleft{i1}
  \fmfright{o1}
  \fmf{double_arrow}{i1,v1}
  \fmf{double_arrow}{v1,v2}
  \fmf{photon,left,tension=0}{v1,v2}
  \fmf{double_arrow}{v2,o1}
  \fmfdotn{v}{2}
  \fmfblob{0.075w}{v1}
\end{fmfgraph}
\end{gathered}$
\end{fmffile}
\par\end{center}
\end{singlespace}

\noindent The finite size correction is denoted by the blob as the
set of $\{\hat{x}(\tau',\omega')|\omega'\in\varOmega\backslash\varOmega_{\tau'}^{\mathrm{ave}}\}$.
This condition appears with the probability of $\mathscr{P}(\varOmega\backslash\varOmega_{\tau'}^{\mathrm{ave}})$,
without any energy-momentum transfer on an electron between at the
initial and at final states. Namely, 
\[
-\frac{e}{c\varepsilon_{0}}\int_{\mathbb{R}}d\tau'\int_{\varOmega\backslash\varOmega_{\tau'}^{\mathrm{ave}}}d\mathscr{P}(\omega')\,\mathrm{Re}\left\{ \mathcal{V}^{\nu}(\hat{x}(\tau',\omega'))\right\} \left.G_{\mathrm{(\mathring{\pm})}}(x,\hat{x}(\tau',\omega'))\right|_{x=\mathbb{E}\llbracket\hat{x}(\tau,\bullet)\rrbracket}
\]
has to be renormalized as a part of its Coulomb field attaching to
an electron. Let us recall (\ref{eq: LA}) of the naive idea of an
observable energy-momentum transfer via radiation. Such radiation
can be found with the probability of $\mathscr{P}(\varOmega_{\tau'}^{\mathrm{ave}})=1-\mathscr{P}(\varOmega\backslash\varOmega_{\tau'}^{\mathrm{ave}})$
in the present model. Hence, $\omega'\in\varOmega_{\tau'}^{\mathrm{ave}}$
represents the sample paths for the derivation of radiation reaction
\footnote{Consider $\mathcal{A}_{(\mathrm{effective},\mathring{\pm})}^{\nu}(\mathbb{E}\llbracket\hat{x}(\tau,\bullet)\rrbracket)$
as $O(\mathbb{E}\llbracket\overset{0}{\otimes}\delta\hat{x}(\tau,\bullet)\rrbracket)$
in $\mathcal{A}_{(\mathrm{effective},\mathring{\pm})}^{\nu}(\hat{x}(\tau,\omega))$.}. 
\begin{eqnarray}
\mathcal{A}_{(\mathrm{effective},\mathring{\pm})}^{\nu}(\mathbb{E}\llbracket\hat{x}(\tau,\bullet)\rrbracket) & \coloneqq & -\frac{e}{c\varepsilon_{0}}\int_{\mathbb{R}}d\tau'\mathscr{P}(\varOmega_{\tau'}^{\mathrm{ave}})\mathrm{Re}\left\{ \mathcal{V}^{\nu}(\mathbb{E}\llbracket\hat{x}(\tau',\bullet)\rrbracket)\right\} \left.G_{(\mathring{\pm})}(x,\mathbb{E}\llbracket\hat{x}(\tau',\bullet)\rrbracket)\right|_{x=\mathbb{E}\llbracket\hat{x}(\tau,\bullet)\rrbracket}\,.
\end{eqnarray}

\begin{lem}
\begin{leftbar}\label{semi-classical-relation}Consider a D-progressive
$\hat{x}(\circ,\bullet)$ and $\delta\hat{x}(\tau,\omega)\coloneqq\hat{x}(\tau,\omega)-\mathbb{E}\llbracket\hat{x}(\tau,\bullet)\rrbracket$.
Then, the following relations are fulfilled:
\begin{equation}
\mathrm{Re}\left\{ \mathcal{V}^{\alpha}(\mathbb{E}\llbracket\hat{x}(\tau,\bullet)\rrbracket)\right\} -\frac{d\mathbb{E}\llbracket\hat{x}^{\alpha}(\tau,\bullet)\rrbracket}{d\tau}=O(\mathbb{E}\llbracket\overset{2}{\otimes}\delta\hat{x}(\tau,\bullet)\rrbracket)\label{eq: relation-001-1}
\end{equation}
\begin{equation}
\frac{d\mathbb{E}\llbracket\hat{x}_{\alpha}(\tau,\bullet)\rrbracket}{d\tau}\cdot\frac{d\mathbb{E}\llbracket\hat{x}^{\alpha}(\tau,\bullet)\rrbracket}{d\tau}-c^{2}=O(\mathbb{E}\llbracket\overset{2}{\otimes}\delta\hat{x}(\tau,\bullet)\rrbracket)\label{eq: relation-002-1}
\end{equation}
\end{leftbar}\end{lem}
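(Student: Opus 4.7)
The plan is to reduce both relations to a single Taylor expansion of the complex velocity around the mean trajectory, combined with two structural inputs from \textbf{Volume~I}: the kinematics (\ref{eq: kinematics}) and the relativistic on-shell condition $\mathrm{Re}\{\mathcal{V}_{\alpha}\}\mathrm{Re}\{\mathcal{V}^{\alpha}\}=c^{2}$ satisfied by the complex velocity at the classical level.

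\emph{Step 1 (Taylor expansion).} First I would expand $\mathcal{V}^{\alpha}(\hat{x}(\tau,\omega))$ around $\mathbb{E}\llbracket\hat{x}(\tau,\bullet)\rrbracket$ using $\delta\hat{x}(\tau,\omega)=\hat{x}(\tau,\omega)-\mathbb{E}\llbracket\hat{x}(\tau,\bullet)\rrbracket$, namely
\[
\mathcal{V}^{\alpha}(\hat{x}(\tau,\omega))=\mathcal{V}^{\alpha}(\mathbb{E}\llbracket\hat{x}\rrbracket)+\delta\hat{x}^{\beta}(\tau,\omega)\,\partial_{\beta}\mathcal{V}^{\alpha}(\mathbb{E}\llbracket\hat{x}\rrbracket)+O(\overset{2}{\otimes}\delta\hat{x}(\tau,\omega)).
\]
Taking the real part and then the expectation, the linear term drops by $\mathbb{E}\llbracket\delta\hat{x}^{\beta}(\tau,\bullet)\rrbracket=0$ (which holds by definition of $\delta\hat{x}$), yielding $\mathbb{E}\llbracket\mathrm{Re}\{\mathcal{V}^{\alpha}(\hat{x})\}\rrbracket=\mathrm{Re}\{\mathcal{V}^{\alpha}(\mathbb{E}\llbracket\hat{x}\rrbracket)\}+O(\mathbb{E}\llbracket\overset{2}{\otimes}\delta\hat{x}\rrbracket)$.

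\emph{Step 2 (Identifying the classical velocity).} Next I would take expectations of the two kinematic equations $d_{\pm}\hat{x}^{\alpha}=\mathcal{V}_{\pm}^{\alpha}(\hat{x})d\tau+\lambda\,dW_{\pm}^{\alpha}$ from (\ref{eq: kinematics}). Since the $(-g)$-Wiener increments satisfy $\mathbb{E}\llbracket dW_{\pm}^{\alpha}\rrbracket=0$, both give $d\mathbb{E}\llbracket\hat{x}^{\alpha}\rrbracket/d\tau=\mathbb{E}\llbracket\mathcal{V}_{\pm}^{\alpha}(\hat{x})\rrbracket$. Plugging into the definition (\ref{eq: complex V}) of the complex velocity,
\[
\mathbb{E}\llbracket\mathrm{Re}\{\mathcal{V}^{\alpha}(\hat{x})\}\rrbracket=\tfrac{1}{2}\mathbb{E}\llbracket\mathcal{V}_{+}^{\alpha}(\hat{x})+\mathcal{V}_{-}^{\alpha}(\hat{x})\rrbracket=\frac{d\mathbb{E}\llbracket\hat{x}^{\alpha}\rrbracket}{d\tau}.
\]
Substituting this identity into the result of Step 1 produces (\ref{eq: relation-001-1}) immediately.

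\emph{Step 3 (Mass-shell contraction).} For (\ref{eq: relation-002-1}) I would square the relation (\ref{eq: relation-001-1}) by contracting it with $d\mathbb{E}\llbracket\hat{x}_{\alpha}\rrbracket/d\tau$, giving
\[
\frac{d\mathbb{E}\llbracket\hat{x}_{\alpha}\rrbracket}{d\tau}\frac{d\mathbb{E}\llbracket\hat{x}^{\alpha}\rrbracket}{d\tau}=\mathrm{Re}\{\mathcal{V}_{\alpha}(\mathbb{E}\llbracket\hat{x}\rrbracket)\}\,\mathrm{Re}\{\mathcal{V}^{\alpha}(\mathbb{E}\llbracket\hat{x}\rrbracket)\}+O(\mathbb{E}\llbracket\overset{2}{\otimes}\delta\hat{x}\rrbracket),
\]
and then invoke the relativistic on-shell condition $\mathrm{Re}\{\mathcal{V}_{\alpha}\}\mathrm{Re}\{\mathcal{V}^{\alpha}\}=c^{2}$, which follows in \textbf{Volume~I} from the Klein-Gordon equation applied to $\mathcal{V}^{\mu}=i\lambda^{2}\partial^{\mu}\ln\phi+(e/m_{0})A^{\mu}$. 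The expected main obstacle is this last input: the Klein-Gordon identity produces $\mathrm{Re}\{\mathcal{V}_{\alpha}\mathcal{V}^{\alpha}\}=c^{2}$ \emph{with} a quantum-pressure term $\mathrm{Im}\{\mathcal{V}_{\alpha}\}\mathrm{Im}\{\mathcal{V}^{\alpha}\}=\lambda^{4}(\partial_{\alpha}\ln|\phi|)(\partial^{\alpha}\ln|\phi|)$, so the delicate step is checking that, evaluated at $\mathbb{E}\llbracket\hat{x}\rrbracket$ in the semi-classical regime where fluctuations $\delta\hat{x}$ encode exactly those $\lambda^{2}$-order quantum scales, this osmotic contribution is consistently of order $O(\mathbb{E}\llbracket\overset{2}{\otimes}\delta\hat{x}\rrbracket)$ and can be absorbed into the remainder.
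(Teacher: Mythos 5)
Your proposal is correct and follows essentially the same route as the paper: the first relation comes from the Taylor expansion about $\mathbb{E}\llbracket\hat{x}(\tau,\bullet)\rrbracket$ with $\mathbb{E}\llbracket\delta\hat{x}\rrbracket=0$ together with the identity $d\mathbb{E}\llbracket\hat{x}^{\alpha}(\tau,\bullet)\rrbracket/d\tau=\mathbb{E}\llbracket\mathrm{Re}\{\mathcal{V}^{\alpha}(\hat{x}(\tau,\bullet))\}\rrbracket$ (which the paper cites as Nelson's partial integration formula, Lemma~31 of Volume~I, and which you re-derive directly from the kinematics using $\mathbb{E}\llbracket dW_{\pm}\rrbracket=0$), and the second from contracting with the mean velocity and invoking the $c^{2}$ invariant. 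The only difference is that the paper uses the invariant in the complex-contracted form $\mathbb{E}\llbracket\mathcal{V}_{\alpha}^{*}(\hat{x}(\tau,\bullet))\mathcal{V}^{\alpha}(\hat{x}(\tau,\bullet))\rrbracket=c^{2}$ (Lemma~29 of Volume~I), so the osmotic term $\mathrm{Im}\{\mathcal{V}_{\alpha}\}\mathrm{Im}\{\mathcal{V}^{\alpha}\}$ you flag as the delicate step is already contained in that invariant and is absorbed into the $O(\mathbb{E}\llbracket\overset{2}{\otimes}\delta\hat{x}(\tau,\bullet)\rrbracket)$ remainder exactly as you anticipate.
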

\begin{proof}
By Nelson's partial integral formula ({\bf Lemma 31} in Ref.\cite{Seto Volume I}),
$d/d\tau\mathbb{E}\llbracket\hat{x}^{\alpha}(\tau,\bullet)\rrbracket=\mathbb{E}\llbracket\mathrm{Re}\{\mathcal{V}^{\alpha}(\hat{x}(\tau,\bullet))\}\rrbracket$
is satisfied. Since $\mathbb{E}\llbracket\delta\hat{x}(\tau,\omega)\rrbracket=0$,
the first relation (\ref{eq: relation-001-1}) is derived as follows:
\begin{eqnarray}
\mathrm{Re}\{\mathcal{V}^{\alpha}(\mathbb{E}\llbracket\hat{x}(\tau,\bullet)\rrbracket)\}-\frac{d}{d\tau}\mathbb{E}\llbracket\hat{x}^{\alpha}(\tau,\bullet)\rrbracket & = & \mathbb{E}\llbracket\mathrm{Re}\{\mathcal{V}^{\alpha}(\mathbb{E}\llbracket\hat{x}(\tau,\bullet)\rrbracket)\}\rrbracket-\mathbb{E}\llbracket\mathrm{Re}\{\mathcal{V}^{\alpha}(\hat{x}(\tau,\bullet))\}\rrbracket\nonumber \\
 & = & \mathbb{E}\llbracket\mathrm{Re}\{\mathcal{V}^{\alpha}(\mathbb{E}\llbracket\hat{x}(\tau',\bullet)\rrbracket)\}-\mathrm{Re}\{\mathcal{V}^{\alpha}(\hat{x}(\tau,\bullet))\}\rrbracket\nonumber \\
 & = & \mathbb{E}\llbracket-\delta\hat{x}^{\beta}(\tau,\bullet)\cdot\partial_{\beta}\mathrm{Re}\{\mathcal{V}^{\alpha}(\mathbb{E}\llbracket\hat{x}(\tau,\bullet)\rrbracket)\}+O(\overset{2}{\otimes}\delta\hat{x}(\tau,\bullet))\rrbracket\nonumber \\
 & = & O(\mathbb{E}\llbracket\overset{2}{\otimes}\delta\hat{x}(\tau,\bullet)\rrbracket)
\end{eqnarray}
The second relation (\ref{eq: relation-002-1}) is demonstrated by
\begin{equation}
\frac{d}{d\tau}\mathbb{E}\llbracket\hat{x}_{\alpha}(\tau,\bullet)\rrbracket\cdot\frac{d}{d\tau}\mathbb{E}\llbracket\hat{x}^{\alpha}(\tau,\bullet)\rrbracket-\mathbb{E}\left\llbracket \mathcal{V}_{\alpha}^{*}(\hat{x}(\tau,\bullet))\mathcal{V}^{\alpha}(\hat{x}(\tau,\bullet))\right\rrbracket =O(\mathbb{E}\llbracket\overset{2}{\otimes}\delta\hat{x}(\tau,\bullet)\rrbracket)
\end{equation}
with the help of the Lorentz invariant $\mathbb{E}\llbracket\mathcal{V}_{\alpha}^{*}(\hat{x}(\tau,\bullet))\mathcal{V}^{\alpha}(\hat{x}(\tau,\bullet))\rrbracket=c^{2}$
of {\bf Lemma 29} in \cite{Seto Volume I}.
\end{proof}
Transforming $\mathcal{A}_{(\mathrm{effective},\mathring{\pm})}^{\nu}(\mathbb{E}\llbracket\hat{x}(\tau,\bullet)\rrbracket)$
by the above, 
\begin{eqnarray}
\mathcal{A}_{(\mathrm{effective},\mathring{\pm})}^{\nu}(\mathbb{E}\llbracket\hat{x}(\tau,\bullet)\rrbracket) & \coloneqq & -\frac{e}{c\varepsilon_{0}}\int_{\mathbb{R}}d\tau'\,\mathscr{P}(\varOmega_{\tau'}^{\mathrm{ave}})\frac{d\mathbb{E}\llbracket\hat{x}^{\nu}(\tau',\bullet)\rrbracket}{d\tau'}\left.G_{(\mathring{\pm})}(x,\mathbb{E}\llbracket\hat{x}(\tau',\bullet)\rrbracket)\right|_{x=\mathbb{E}\llbracket\hat{x}(\tau,\bullet)\rrbracket}\,.
\end{eqnarray}

\noindent Thus, its field strength is
\begin{multline}
\mathcal{F}_{(\mathrm{effective},\mathring{\pm})}^{\mu\nu}(\mathbb{E}\llbracket\hat{x}(\tau,\bullet)\rrbracket)=-\frac{e}{c\varepsilon_{0}}\int_{\mathbb{R}}d\tau'\,\mathscr{P}(\varOmega_{\tau'}^{\mathrm{ave}})\\
\left(\begin{gathered}\frac{d\mathbb{E}\llbracket\hat{x}^{\nu}(\tau',\bullet)\rrbracket}{d\tau'}\cdot\partial^{\mu}-\frac{d\mathbb{E}\llbracket\hat{x}^{\mu}(\tau',\bullet)\rrbracket}{d\tau'}\cdot\partial^{\nu}\end{gathered}
\right)\left.G_{(\mathring{\pm})}(x,\mathbb{E}\llbracket\hat{x}(\tau',\bullet)\rrbracket)\right|_{x=\mathbb{E}\llbracket\hat{x}(\tau,\bullet)\rrbracket}\,.\label{eq: F_ave-001}
\end{multline}
Let us follow the method in Ref.\cite{Hartemann(2001)}. By {\bf Lemma \ref{semi-classical-relation}},
the Green function $G_{(\mathring{\pm})}$ is transformed as
\begin{eqnarray}
G_{(\mathring{\pm})}(\mathbb{E}\llbracket\hat{x}(\tau,\bullet)\rrbracket,\mathbb{E}\llbracket\hat{x}(\tau',\bullet)\rrbracket) & = & \frac{\theta(\mathring{\pm}\mathbb{E}\llbracket\hat{x}^{0}(\tau,\bullet)-\hat{x}^{0}(\tau',\bullet)\rrbracket)\times\delta(\tau-\tau')}{4\pi c^{2}\times|\tau'-\tau|}\nonumber \\
 &  & \qquad\times\left[1+O\left((\tau'-\tau)^{2},\mathbb{E}\llbracket\overset{2}{\otimes}\delta\hat{x}(\tau,\omega)\rrbracket\right)\right]\,.\label{eq: def_Green}
\end{eqnarray}
Where, the causality is introduced by $\theta(\mathring{\pm}\mathbb{E}\llbracket\hat{x}^{0}(\tau,\bullet)-\hat{x}^{0}(\tau',\bullet)\rrbracket)=\theta(\mathring{\pm}(\tau-\tau'))$
and (\ref{eq: relation-002-1}) is employed. In addition, 
\begin{eqnarray}
\partial^{\mu}\left.G_{(\mathring{\pm})}(x,\mathbb{E}\llbracket\hat{x}(\tau',\bullet)\rrbracket)\right|_{x=\mathbb{E}\llbracket\hat{x}(\tau,\bullet)\rrbracket} & = & -\frac{\mathbb{E}\llbracket\hat{x}^{\mu}(\tau',\bullet)-\hat{x}^{\mu}(\tau,\bullet)\rrbracket}{c^{2}\times(\tau'-\tau)}\frac{d}{d\tau'}G_{(\mathring{\pm})}(\mathbb{E}\llbracket\hat{x}(\tau,\bullet)\rrbracket,\mathbb{E}\llbracket\hat{x}(\tau',\bullet)\rrbracket)\nonumber \\
 &  & \qquad\times\left[1+O\left((\tau'-\tau)^{2},\mathbb{E}\llbracket\overset{2}{\otimes}\delta\hat{x}(\tau,\omega)\rrbracket\right)\right]\,.\label{eq:partial_G}
\end{eqnarray}
$\mathcal{F}_{(\mathrm{effective},\mathring{\pm})}(\mathbb{E}\llbracket\hat{x}(\tau,\bullet)\rrbracket)$
of (\ref{eq: F_ave-001}) becomes 
\begin{eqnarray}
\mathcal{F}_{(\mathrm{effective},\mathring{\pm})}(\mathbb{E}\llbracket\hat{x}(\tau,\bullet)\rrbracket) & = & \frac{e}{c\varepsilon_{0}}\int_{\mathbb{R}}d\tau'\,\left[\begin{gathered}\left(\mathscr{P}(\varOmega_{\tau'}^{\mathrm{ave}})\cdot\frac{d\mathbb{E}\llbracket\hat{x}^{\nu}(\tau',\bullet)\rrbracket}{d\tau'}\right)\times\mathbb{E}\llbracket\hat{x}^{\mu}(\tau,\bullet)-\hat{x}^{\mu}(\tau',\bullet)\rrbracket\\
-\left(\mathscr{P}(\varOmega_{\tau'}^{\mathrm{ave}})\cdot\frac{d\mathbb{E}\llbracket\hat{x}^{\mu}(\tau',\bullet)\rrbracket}{d\tau'}\right)\times\mathbb{E}\llbracket\hat{x}^{\nu}(\tau,\bullet)-\hat{x}^{\nu}(\tau',\bullet)\rrbracket
\end{gathered}
\right]\nonumber \\
 &  & \times\frac{\mathbb{E}\llbracket\hat{x}^{\mu}(\tau',\bullet)-\hat{x}^{\mu}(\tau,\bullet)\rrbracket}{c^{2}\times(\tau'-\tau)}\frac{d}{d\tau'}G_{(\mathring{\pm})}(\mathbb{E}\llbracket\hat{x}(\tau,\bullet)\rrbracket,\mathbb{E}\llbracket\hat{x}(\tau',\bullet)\rrbracket)\nonumber \\
 &  & \times\left[1+O\left((\tau'-\tau)^{2}\right)\right]\,.
\end{eqnarray}
Since
\begin{eqnarray}
\lefteqn{{\left(\mathscr{P}(\varOmega_{\tau'}^{\mathrm{ave}})\cdot\frac{d}{d\tau}\mathbb{E}\llbracket\hat{x}^{\nu}(\tau',\bullet)\rrbracket\right)\cdot\mathbb{E}\llbracket\hat{x}^{\mu}(\tau,\bullet)-\hat{x}^{\mu}(\tau',\bullet)\rrbracket-(\mu\longleftrightarrow\nu)}}\nonumber \\
 & = & -(\tau-\tau')^{2}\times\frac{\mathscr{P}(\varOmega_{\tau}^{\mathrm{ave}})}{2}\times\left[\frac{d\mathbb{E}\left\llbracket \hat{x}^{\mu}(\tau,\bullet)\right\rrbracket }{d\tau}\cdot\frac{d^{2}\mathbb{E}\left\llbracket \hat{x}^{\nu}(\tau,\bullet)\right\rrbracket }{d\tau^{2}}-(\mu\longleftrightarrow\nu)\right]\nonumber \\
 &  & -(\tau-\tau')^{3}\times\mathscr{P}(\varOmega_{\tau}^{\mathrm{ave}})\times\left[\begin{gathered}\frac{1}{2}\frac{\ln\mathscr{P}(\varOmega_{\tau}^{\mathrm{ave}})}{d\tau}\times\frac{d\mathbb{E}\left\llbracket \hat{x}^{\mu}(\tau,\bullet)\right\rrbracket }{d\tau}\cdot\frac{d^{2}\mathbb{E}\left\llbracket \hat{x}^{\nu}(\tau,\bullet)\right\rrbracket }{d\tau^{2}}\\
\frac{1}{3}\times\frac{d\mathbb{E}\left\llbracket \hat{x}^{\mu}(\tau,\bullet)\right\rrbracket }{d\tau}\cdot\frac{d^{3}\mathbb{E}\left\llbracket \hat{x}^{\nu}(\tau,\bullet)\right\rrbracket }{d\tau^{3}}-(\mu\longleftrightarrow\nu)
\end{gathered}
\right]\nonumber \\
 &  & +O\left((\tau'-\tau)^{4},\mathbb{E}\llbracket\overset{2}{\otimes}\delta\hat{x}(\tau,\omega)\rrbracket\right)\,,
\end{eqnarray}
 the field $\mathcal{F}_{(\mathrm{effective},\mathring{\pm})}(\mathbb{E}\llbracket\hat{x}(\tau,\bullet)\rrbracket)$
is obtained:
\begin{eqnarray}
\mathcal{F}_{(\mathrm{effective},\mathring{\pm})}^{\mu\nu}(\mathbb{E}\llbracket\hat{x}(\tau,\bullet)\rrbracket) & = & \frac{3}{4}\frac{m_{0}\tau_{0}\mathscr{P}(\varOmega_{\tau}^{\mathrm{ave}})}{ec^{2}}\times\left(\begin{gathered}\frac{d^{2}\mathbb{\mathbb{E}}\llbracket\hat{x}^{\mu}(\tau,\bullet)\rrbracket}{d\tau^{2}}\cdot\frac{d\mathbb{E}\llbracket\hat{x}^{\nu}(\tau,\bullet)\rrbracket}{d\tau}\\
-\frac{d^{2}\mathbb{E}\llbracket\hat{x}^{\nu}(\tau,\bullet)\rrbracket}{d\tau^{2}}\cdot\frac{d\mathbb{\mathbb{E}}\llbracket\hat{x}^{\mu}(\tau,\bullet)\rrbracket}{d\tau}
\end{gathered}
\right)\int_{\mathbb{R}}d\tau'\,\frac{\delta(\tau'-\tau)}{|\tau'-\tau|}\nonumber \\
 &  & \mathring{\mp}\frac{m_{0}\tau_{0}\mathscr{P}(\varOmega_{\tau}^{\mathrm{ave}})}{ec^{2}}\times\left(\begin{gathered}\dot{a}(\mathbb{\mathbb{E}}\llbracket\hat{x}(\tau,\bullet)\rrbracket)\otimes\frac{d\mathbb{E}\llbracket\hat{x}(\tau,\bullet)\rrbracket}{d\tau}\\
-\frac{d\mathbb{E}\llbracket\hat{x}(\tau,\bullet)\rrbracket}{d\tau}\otimes\dot{a}(\mathbb{\mathbb{E}}\llbracket\hat{x}(\tau,\bullet)\rrbracket)
\end{gathered}
\right)
\end{eqnarray}
Where, $\tau_{0}\coloneqq e^{2}/6\pi\varepsilon_{0}m_{0}c^{3}$ and
$\dot{a}(\mathbb{\mathbb{E}}\llbracket\hat{x}(\tau,\bullet)\rrbracket)\in\mathbb{V}_{\mathrm{M}}^{\mathrm{4}}$
denotes
\begin{equation}
\dot{a}(\mathbb{\mathbb{E}}\llbracket\hat{x}(\tau,\bullet)\rrbracket)\coloneqq\frac{d^{3}\mathbb{E}\llbracket\hat{x}(\tau,\bullet)\rrbracket}{d\tau^{3}}+\frac{3}{2}\frac{d\ln\mathscr{P}(\varOmega_{\tau}^{\mathrm{ave}})}{d\tau}\frac{d^{2}\mathbb{E}\llbracket\hat{x}(\tau,\bullet)\rrbracket}{d\tau^{2}}\,.\label{eq: dot_a}
\end{equation}
Therefore, the effective radiation reaction field on $\mathbb{E}\llbracket\hat{x}(\circ,\bullet)\rrbracket$
is the following: 
\begin{lem}[Effective radiation reaction field 1]
\label{F_average}\begin{leftbar}The effective radiation reaction
field $\mathfrak{F}(\mathbb{E}\llbracket\hat{x}(\tau,\bullet)\rrbracket)\in\mathbb{V}_{\mathrm{M}}^{\mathrm{4}}\otimes\mathbb{V}_{\mathrm{M}}^{\mathrm{4}}$
is defined as follows:
\begin{eqnarray}
\mathfrak{F}(\mathbb{E}\llbracket\hat{x}(\tau,\bullet)\rrbracket) & \coloneqq & \frac{\mathcal{F}_{(\mathrm{effective},\mathring{+})}(\mathbb{E}\llbracket\hat{x}(\tau,\bullet)\rrbracket)-\mathcal{F}_{(\mathrm{effective},\mathring{-})}(\mathbb{E}\llbracket\hat{x}(\tau,\bullet)\rrbracket)]}{2}\nonumber \\
 & = & -\frac{m_{0}\tau_{0}\mathscr{P}(\varOmega_{\tau}^{\mathrm{ave}})}{ec^{2}}\times\left(\begin{gathered}\dot{a}(\mathbb{\mathbb{E}}\llbracket\hat{x}(\tau,\bullet)\rrbracket)\otimes\frac{d\mathbb{E}\llbracket\hat{x}(\tau,\bullet)\rrbracket}{d\tau}-\frac{d\mathbb{E}\llbracket\hat{x}(\tau,\bullet)\rrbracket}{d\tau}\otimes\dot{a}(\mathbb{\mathbb{E}}\llbracket\hat{x}(\tau,\bullet)\rrbracket)\end{gathered}
\right)\label{eq: RR field on ave-point}
\end{eqnarray}
\end{leftbar}
\end{lem}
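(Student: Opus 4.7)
The plan is to read off the conclusion by direct substitution and subtraction, using the formula for $\mathcal{F}_{(\mathrm{effective},\mathring{\pm})}^{\mu\nu}(\mathbb{E}\llbracket\hat{x}(\tau,\bullet)\rrbracket)$ that was derived in the display just above the statement of \textbf{Lemma \ref{F_average}}. That expression splits cleanly into two summands: a singular piece carrying the divergent integral $\int_{\mathbb{R}}d\tau'\,\delta(\tau'-\tau)/|\tau'-\tau|$, whose tensorial prefactor is the antisymmetric product of $d^{2}\mathbb{E}\llbracket\hat{x}\rrbracket/d\tau^{2}$ with $d\mathbb{E}\llbracket\hat{x}\rrbracket/d\tau$, and a regular piece whose tensorial prefactor is the antisymmetric product of $\dot{a}(\mathbb{E}\llbracket\hat{x}\rrbracket)$ with $d\mathbb{E}\llbracket\hat{x}\rrbracket/d\tau$. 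The key structural observation is that the first summand is $\mathring{\pm}$-independent, while the second carries the sign prefactor $\mathring{\mp}$.

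First, I would observe that in forming the antisymmetric combination $[\mathcal{F}_{(\mathrm{effective},\mathring{+})}-\mathcal{F}_{(\mathrm{effective},\mathring{-})}]/2$ the singular summand cancels identically, exactly as in the classical LAD argument of \textbf{Lemma \ref{lemma_F_LAD}} where the divergent self-field (absorbed into $\delta\mathfrak{f}$) is symmetric under the exchange ret/adv. This cancellation is what justifies identifying $\mathfrak{F}$ as the effective, finite radiation reaction field, in line with Dirac's construction.

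Second, I would compute the contribution of the $\dot{a}$-piece. Since $\mathring{\mp}$ evaluates to $-1$ for $\mathring{+}$ and to $+1$ for $\mathring{-}$, the subtraction doubles this piece with an overall minus sign, and dividing by $2$ gives exactly the tensor stated on the right-hand side of \eqref{eq: RR field on ave-point}, with $\dot{a}$ defined by \eqref{eq: dot_a}. No further analysis is required.

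The genuine work behind this lemma has in fact already been done: it lives in the Taylor-type expansion of $G_{(\mathring{\pm})}(\mathbb{E}\llbracket\hat{x}(\tau,\bullet)\rrbracket,\mathbb{E}\llbracket\hat{x}(\tau',\bullet)\rrbracket)$ via \eqref{eq: def_Green}--\eqref{eq:partial_G} and in the application of \textbf{Lemma \ref{semi-classical-relation}} (so that $\mathrm{Re}\{\mathcal{V}^{\alpha}\}$ could be replaced by $d\mathbb{E}\llbracket\hat{x}^{\alpha}\rrbracket/d\tau$ up to $O(\mathbb{E}\llbracket\otimes^{2}\delta\hat{x}\rrbracket)$). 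The only delicate point in the present statement is therefore to ensure, in the ordering of the cubic $(\tau-\tau')^{3}$-term, that the logarithmic derivative $d\ln\mathscr{P}(\varOmega_{\tau}^{\mathrm{ave}})/d\tau$ appearing in $\dot{a}$ really arises from differentiating the weight $\mathscr{P}(\varOmega_{\tau'}^{\mathrm{ave}})$ inside the $\tau'$-integral before localising at $\tau'=\tau$; this is a book-keeping step that is already implicit in the cubic expansion quoted above, and I anticipate it being the only line of the proof that is more than mechanical.
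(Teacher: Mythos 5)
Your proposal is correct and follows essentially the same route as the paper: the lemma is obtained exactly as you describe, by halving the difference of the displayed expression for $\mathcal{F}_{(\mathrm{effective},\mathring{\pm})}(\mathbb{E}\llbracket\hat{x}(\tau,\bullet)\rrbracket)$, so that the sign-independent singular term with $\int_{\mathbb{R}}d\tau'\,\delta(\tau'-\tau)/|\tau'-\tau|$ cancels while the $\mathring{\mp}$-signed $\dot{a}$-term survives with the stated overall minus sign. You also correctly locate the substantive work in the preceding Green-function expansion and in Lemma \ref{semi-classical-relation}, including the origin of the $d\ln\mathscr{P}(\varOmega_{\tau}^{\mathrm{ave}})/d\tau$ contribution to $\dot{a}$ from the weight $\mathscr{P}(\varOmega_{\tau'}^{\mathrm{ave}})$ inside the $\tau'$-integral, which is precisely how the paper proceeds.
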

Consider the following transformation of (\ref{eq: RR field on ave-point}),
the wider relation is found. 
\begin{eqnarray}
\int_{\varOmega_{\tau}^{\mathrm{ave}}}d\mathscr{P}(\omega)\mathfrak{F}(\hat{x}(\tau,\omega)) & = & \int_{\varOmega_{\tau}^{\mathrm{ave}}}d\mathscr{P}(\omega)\left[-\frac{m_{0}\tau_{0}}{ec^{2}}\times\left(\begin{gathered}\dot{a}(\hat{x}(\tau,\omega))\otimes\mathrm{Re}\left\{ \mathcal{V}(\hat{x}(\tau,\omega))\right\} \\
-\mathrm{Re}\left\{ \mathcal{V}(\hat{x}(\tau,\omega))\right\} \otimes\dot{a}(\hat{x}(\tau,\omega))
\end{gathered}
\right)\right]
\end{eqnarray}
\begin{snugshade}
\begin{lem}[Effective radiation reaction field 2]
\label{F_full}\begin{leftbar}The field $\mathfrak{F}(\hat{x}(\tau,\omega))\in\mathbb{V}_{\mathrm{M}}^{\mathrm{4}}\otimes\mathbb{V}_{\mathrm{M}}^{\mathrm{4}}$
is derived as follows:
\begin{equation}
\mathfrak{F}(\hat{x}(\tau,\omega))=-\frac{m_{0}\tau_{0}}{ec^{2}}\times\left[\begin{gathered}\dot{a}(x)\otimes\mathrm{Re}\left\{ \mathcal{V}(x)\right\} -\mathrm{Re}\left\{ \mathcal{V}(x)\right\} \otimes\dot{a}(x)\end{gathered}
\right]_{x=\hat{x}(\tau,\omega)}\label{eq: F_Full}
\end{equation}
Where, $\dot{a}(x)$ satisfies 
\begin{equation}
\dot{a}(\mathbb{\mathbb{E}}\llbracket\hat{x}(\tau,\bullet)\rrbracket)\coloneqq\frac{d^{3}\mathbb{\mathbb{E}}\llbracket\hat{x}(\tau,\bullet)\rrbracket}{d\tau^{3}}+\frac{3}{2}\frac{d\ln\mathscr{P}(\varOmega_{\tau}^{\mathrm{ave}})}{d\tau}\times\frac{d^{2}\mathbb{\mathbb{E}}\llbracket\hat{x}(\tau,\bullet)\rrbracket}{d\tau^{2}}+O(\mathbb{E}\llbracket\overset{2}{\otimes}\delta\hat{x}(\tau,\omega)\rrbracket)\,.\label{eq: dot-a-2}
\end{equation}
\end{leftbar}
\end{lem}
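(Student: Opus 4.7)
The plan is to derive (\ref{eq: F_Full}) from Lemma~\ref{F_average} by reinterpreting the scalar prefactor $\mathscr{P}(\varOmega_{\tau}^{\mathrm{ave}})$ in (\ref{eq: RR field on ave-point}) as the integration measure of the subset $\varOmega_{\tau}^{\mathrm{ave}}\subset\varOmega$, and then using the defining property of $\varOmega_{\tau}^{\mathrm{ave}}$ to recognise the resulting integrand as the pointwise field $\mathfrak{F}(\hat{x}(\tau,\omega))$.

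Concretely, I first apply Lemma~\ref{semi-classical-relation}, equation~(\ref{eq: relation-001-1}), to replace every occurrence of $d\mathbb{E}\llbracket\hat{x}^{\alpha}(\tau,\bullet)\rrbracket/d\tau$ in (\ref{eq: RR field on ave-point}) with $\mathrm{Re}\{\mathcal{V}^{\alpha}(\mathbb{E}\llbracket\hat{x}(\tau,\bullet)\rrbracket)\}$, the discrepancy being of order $O(\mathbb{E}\llbracket\overset{2}{\otimes}\delta\hat{x}(\tau,\bullet)\rrbracket)$, which is precisely the error tolerated in (\ref{eq: dot-a-2}). Next I use the definition $\varOmega_{\tau}^{\mathrm{ave}}=\{\omega\,|\,\hat{x}(\tau,\omega)=\mathbb{E}\llbracket\hat{x}(\tau,\bullet)\rrbracket\}$: on this set, for any Borel-measurable function $g$, the composition $g(\hat{x}(\tau,\omega))$ is constant in $\omega$ and equals $g(\mathbb{E}\llbracket\hat{x}(\tau,\bullet)\rrbracket)$. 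Applying this to the antisymmetric product $\dot{a}\otimes\mathrm{Re}\{\mathcal{V}\}-\mathrm{Re}\{\mathcal{V}\}\otimes\dot{a}$ and using $\int_{\varOmega_{\tau}^{\mathrm{ave}}}d\mathscr{P}(\omega)=\mathscr{P}(\varOmega_{\tau}^{\mathrm{ave}})$, Lemma~\ref{F_average} is rewritten as
\[
\mathscr{P}(\varOmega_{\tau}^{\mathrm{ave}})\cdot\mathfrak{F}(\mathbb{E}\llbracket\hat{x}(\tau,\bullet)\rrbracket)=\int_{\varOmega_{\tau}^{\mathrm{ave}}}d\mathscr{P}(\omega)\left[-\frac{m_{0}\tau_{0}}{ec^{2}}\Big(\dot{a}(\hat{x}(\tau,\omega))\otimes\mathrm{Re}\{\mathcal{V}(\hat{x}(\tau,\omega))\}-\mathrm{Re}\{\mathcal{V}(\hat{x}(\tau,\omega))\}\otimes\dot{a}(\hat{x}(\tau,\omega))\Big)\right],
\]
which is exactly the transformation displayed immediately before the lemma statement.

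The last step is to promote this identity---an equality of $\varOmega_{\tau}^{\mathrm{ave}}$-integrals of an expression in the sample variable $\omega$---to the pointwise definition (\ref{eq: F_Full}). I read the right-hand side of the identity above as the expectation over $\varOmega_{\tau}^{\mathrm{ave}}$ of a candidate pointwise field $\mathfrak{F}(\hat{x}(\tau,\omega))$ defined by (\ref{eq: F_Full}), and extend this definition to arbitrary $\omega\in\varOmega$ by literal substitution $x=\hat{x}(\tau,\omega)$ in the bracketed expression. The main obstacle in this plan is precisely this extension from $\varOmega_{\tau}^{\mathrm{ave}}$ to the whole of $\varOmega$: away from $\varOmega_{\tau}^{\mathrm{ave}}$ the replacement $\mathbb{E}\llbracket\hat{x}\rrbracket\mapsto\hat{x}(\tau,\omega)$ introduces corrections controlled by a Taylor expansion analogous to (\ref{eq: field-expansion}), and these corrections must be shown to be of order $O(\mathbb{E}\llbracket\overset{2}{\otimes}\delta\hat{x}(\tau,\omega)\rrbracket)$ so that they can be absorbed into the trailing error term of (\ref{eq: dot-a-2}). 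Once this error bookkeeping is done, the pointwise expression (\ref{eq: F_Full}) together with the corrected definition (\ref{eq: dot-a-2}) of $\dot{a}$ follows.
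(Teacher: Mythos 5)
Your route is essentially the paper's own: start from Lemma \ref{F_average}, use Lemma \ref{semi-classical-relation} to trade $d\mathbb{E}\llbracket\hat{x}(\tau,\bullet)\rrbracket/d\tau$ for $\mathrm{Re}\{\mathcal{V}(\mathbb{E}\llbracket\hat{x}(\tau,\bullet)\rrbracket)\}$ up to $O(\mathbb{E}\llbracket\overset{2}{\otimes}\delta\hat{x}(\tau,\bullet)\rrbracket)$, reinterpret the scalar prefactor $\mathscr{P}(\varOmega_{\tau}^{\mathrm{ave}})$ as the measure $\int_{\varOmega_{\tau}^{\mathrm{ave}}}d\mathscr{P}(\omega)$ using the constancy of the integrand on $\varOmega_{\tau}^{\mathrm{ave}}$, and then read off the pointwise field (\ref{eq: F_Full}) by literal substitution, extending it off $\varOmega_{\tau}^{\mathrm{ave}}$ by a Taylor expansion; the paper performs this last step in the same leading-order spirit, by computing $\partial_{\alpha}\mathfrak{F}(\mathbb{E}\llbracket\hat{x}(\tau,\bullet)\rrbracket)$ and invoking the expansion of the type (\ref{eq: field-expansion}), so the part you defer is no weaker than what the paper actually does.

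There is, however, a concrete bookkeeping error in your central identity: you place an extra factor $\mathscr{P}(\varOmega_{\tau}^{\mathrm{ave}})$ on the left-hand side. In Lemma \ref{F_average} the probability factor is already contained inside $\mathfrak{F}(\mathbb{E}\llbracket\hat{x}(\tau,\bullet)\rrbracket)$ itself, cf. (\ref{eq: RR field on ave-point}), so the correct rewriting is $\mathfrak{F}(\mathbb{E}\llbracket\hat{x}(\tau,\bullet)\rrbracket)=\int_{\varOmega_{\tau}^{\mathrm{ave}}}d\mathscr{P}(\omega)\,\bigl[-\tfrac{m_{0}\tau_{0}}{ec^{2}}\bigl(\dot{a}(\hat{x}(\tau,\omega))\otimes\mathrm{Re}\{\mathcal{V}(\hat{x}(\tau,\omega))\}-\mathrm{Re}\{\mathcal{V}(\hat{x}(\tau,\omega))\}\otimes\dot{a}(\hat{x}(\tau,\omega))\bigr)\bigr]$: the single power of $\mathscr{P}(\varOmega_{\tau}^{\mathrm{ave}})$ must be generated entirely by the measure of $\varOmega_{\tau}^{\mathrm{ave}}$, not written again on the left. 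With your version, the very constancy argument you invoke forces the pointwise integrand at the mean point to equal $\mathfrak{F}(\mathbb{E}\llbracket\hat{x}(\tau,\bullet)\rrbracket)$ itself, so the pointwise field you would extract differs from (\ref{eq: F_Full}) by one factor of $\mathscr{P}(\varOmega_{\tau}^{\mathrm{ave}})$. This is not cosmetic: the content of Lemma \ref{F_full} (and of the later identification of $q(\chi)$ with $\mathscr{P}(\varOmega_{\tau}^{\mathrm{ave}})$ in the radiation formula) is precisely that the pointwise field carries no probability factor, that factor appearing only through $\mathfrak{F}(\mathbb{E}\llbracket\hat{x}(\tau,\bullet)\rrbracket)=\int_{\varOmega_{\tau}^{\mathrm{ave}}}d\mathscr{P}(\omega)\,\mathfrak{F}(\hat{x}(\tau,\omega))$. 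Correct the left-hand side and the remainder of your plan coincides with the paper's argument.
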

\end{snugshade}{\bf Lemma \ref{F_average}} is derived from {\bf Lemma \ref{F_full}}
by 
\begin{equation}
\mathfrak{F}(\mathbb{E}\llbracket\hat{x}(\tau,\bullet)\rrbracket)=\int_{\varOmega_{\tau}^{\mathrm{ave}}}d\mathscr{P}(\omega)\,\mathfrak{F}(\hat{x}(\tau,\omega))\,.
\end{equation}
Since 
\begin{eqnarray}
\partial_{\alpha}\mathfrak{F}(\mathbb{E}\llbracket\hat{x}(\tau,\bullet)\rrbracket) & = & -\frac{m_{0}\tau_{0}}{ec^{2}}\int_{\varOmega_{\tau}^{\mathrm{ave}}}d\mathscr{P}(\omega)\partial_{\alpha}\left[\begin{gathered}\dot{a}(x)\otimes\mathrm{Re}\left\{ \mathcal{V}(x)\right\} -\mathrm{Re}\left\{ \mathcal{V}(x)\right\} \otimes\dot{a}(x)\end{gathered}
\right]_{x=\hat{x}(\tau,\omega)}\nonumber \\
 & = & -\frac{m_{0}\tau_{0}\mathscr{P}(\varOmega_{\tau}^{\mathrm{ave}})}{ec^{2}}\partial_{\alpha}\left[\begin{gathered}\dot{a}(x)\otimes\mathrm{Re}\left\{ \mathcal{V}(x)\right\} -\mathrm{Re}\left\{ \mathcal{V}(x)\right\} \otimes\dot{a}(x)\end{gathered}
\right]_{x=\hat{x}(\tau,\omega)}\,,
\end{eqnarray}
the leading-order terms of 
\begin{equation}
\mathfrak{F}(\hat{x}(\tau,\omega))=\mathfrak{F}(\mathbb{E}\llbracket\hat{x}(\tau,\bullet)\rrbracket)+\delta\hat{x}^{\alpha}(\tau,\omega)\cdot\partial_{\alpha}\mathfrak{F}(\mathbb{E}\llbracket\hat{x}(\tau,\bullet)\rrbracket)+O(\otimes^{2}\delta\hat{x}(\tau,\omega))
\end{equation}
can be estimated. By putting $F_{\mathrm{ex}}$ a non-trivial solution
of (\ref{eq: Maxwell_for_derivation}) such that $\partial_{\mu}F_{\mathrm{ex}}^{\mu\nu}(x)=0$,
the particular solution $\mathcal{F}$ becomes as follows:
\begin{equation}
\mathcal{F}(\hat{x}(\tau,\omega))=F_{\mathrm{ex}}(\hat{x}(\tau,\omega))+\mathfrak{F}(\hat{x}(\tau,\omega))+\delta\mathfrak{f}(\hat{x}(\tau,\omega))
\end{equation}
This is the suggestion of (\ref{eq: relation of radiation field})
on {\bf Lemma \ref{F_LAD_stochastic}}. Where, the singularity $\delta\mathfrak{f}(x)$,
namely, the part of non-effective radiation, is defined by
\begin{eqnarray}
\delta\mathfrak{f}^{\mu\nu}(x) & \coloneqq & -\frac{e}{c\varepsilon_{0}}\int_{\mathbb{R}}d\tau'\,\int_{\varOmega\backslash\varOmega_{\tau'}^{\mathrm{ave}}}d\mathscr{P}(\omega')\left(\begin{gathered}\mathrm{Re}\left\{ \mathcal{V}^{\nu}(\hat{x}(\tau',\omega'))\right\} \partial^{\mu}\\
-\mathrm{Re}\left\{ \mathcal{V}^{\mu}(\hat{x}(\tau',\omega'))\right\} \partial^{\nu}
\end{gathered}
\right)\frac{G_{\mathrm{(\mathring{+})}}(x,\hat{x}(\tau',\omega'))-G_{\mathrm{(\mathring{-})}}(x,\hat{x}(\tau',\omega'))}{2}\nonumber \\
 &  & -\frac{e}{c\varepsilon_{0}}\int_{\mathbb{R}}d\tau'\,\mathbb{E}\left\llbracket \left(\begin{gathered}\mathrm{Re}\left\{ \mathcal{V}^{\nu}(\hat{x}(\tau',\bullet))\right\} \partial^{\mu}\\
-\mathrm{Re}\left\{ \mathcal{V}^{\mu}(\hat{x}(\tau',\bullet))\right\} \partial^{\nu}
\end{gathered}
\right)\frac{G_{\mathrm{(\mathring{+})}}(x,\hat{x}(\tau',\bullet))+G_{\mathrm{(\mathring{-})}}(x,\hat{x}(\tau',\bullet))}{2}\right\rrbracket \,.
\end{eqnarray}
The appearance of the integral $-e/c\varepsilon_{0}\times\int_{\varOmega\backslash\varOmega_{\tau'}^{\mathrm{ave}}}d\mathscr{P}(\omega')\cdots$
in this equation is the difference from classical dynamics.

\subsection{Probability: $\mathscr{P}(\varOmega_{\tau}^{\mathrm{ave}})$}

Let us consider an equation of continuity for a D-progressive $\hat{x}(\circ,\bullet)$
on $(\varOmega,\mathcal{F},\mathscr{P})$, namely, 
\begin{equation}
\partial_{\tau}p(x,\tau)+\partial_{\mu}\left[\mathrm{Re}\{\mathcal{V}^{\mu}(x)\}p(x,\tau)\right]=0,\,\,x\in\hat{x}(\tau,\varOmega)\mathrm{\,\,for\,each\,}\tau\label{eq: eq of continuity}
\end{equation}
derived from the ''$\pm$''-Fokker-Planck equations (\ref{eq: Fokker-Planck})
\cite{Seto Volume I}. We want to evaluate the probability density
$p(\tau,x)$ by using this equation. By transforming it,
\begin{equation}
\partial_{\tau}\ln p(x,\tau)+\mathrm{Re}\{\mathcal{V}^{\mu}(x)\}\cdot\partial_{\mu}\ln p(x,\tau)+\partial_{\mu}\mathrm{Re}\{\mathcal{V}^{\mu}(x)\}=0,\,\,x\in\hat{x}(\tau,\varOmega)\mathrm{\,\,for\,each\,}\tau
\end{equation}
is found. Since 
\begin{equation}
\mathcal{V}^{\mu}(x)\coloneqq i\lambda^{2}\times\partial{}^{\alpha}\ln\phi(x)+\frac{e}{m_{0}}A{}^{\alpha}(x),\,\,x\in\hat{x}(\tau,\varOmega)\mathrm{\,\,for\,each\,}\tau,
\end{equation}
(\ref{eq: eq of continuity}) becomes
\begin{equation}
\partial_{\tau}\ln p(x,\tau)+\mathrm{Re}\{\mathcal{V}^{\mu}(x)\}\cdot\partial_{\mu}\ln p(x,\tau)=\mathrm{Re}\{\mathcal{V}^{\mu}(x)\}\cdot\partial_{\mu}\ln|\phi(x)|^{2}\,.
\end{equation}
Where, $\mathrm{Re}\{\mathcal{V}^{\mu}(x)\}=[\phi^{*}(x)\phi(x)]^{-1}\times j_{\mathrm{\mathrm{K\mathchar`-G}}}^{\mu}(x)$
and $\partial_{\mu}j_{\mathrm{\mathrm{K\mathchar`-G}}}^{\mu}(x)=0$
are employed (see Ref.\cite{Seto Volume I}).  The operator $\partial_{\tau}+\mathrm{Re}\{\mathcal{V}^{\mu}(x)\}\cdot\partial_{\mu}$
denotes the Lagrangian representation in fluid dynamics and $\partial_{\mu}\left[\phi^{*}(x)\phi(x)\right]\ne0$
on $x\in\hat{x}(\tau,\varOmega)$ for each $\tau$. Let us proceed
this idea aggressively. Consider it on $x\in\{\mathbb{E}\llbracket\hat{x}(\tau,\bullet)\rrbracket|\tau\in\mathbb{R}\}$,
then,
\begin{eqnarray}
\frac{d}{d\tau}\ln p(\mathbb{E}\llbracket\hat{x}(\tau,\bullet)\rrbracket,\tau) & = & \frac{d\mathbb{E}\llbracket\hat{x}^{\mu}(\tau,\bullet)\rrbracket}{d\tau}\cdot\partial_{\mu}\ln|\phi(\mathbb{E}\llbracket\hat{x}(\tau,\bullet)\rrbracket)|^{2}+O(\mathbb{E}\llbracket\overset{2}{\otimes}\delta\hat{x}(\tau,\bullet)\rrbracket)\nonumber \\
 & = & \frac{d}{d\tau}\ln|\phi(\mathbb{E}\llbracket\hat{x}(\tau,\bullet)\rrbracket)|^{2}+O(\mathbb{E}\llbracket\overset{2}{\otimes}\delta\hat{x}(\tau,\bullet)\rrbracket)\,.\label{eq: time_evo_prob}
\end{eqnarray}
When the RHS of (\ref{eq: time_evo_prob}) is well-defined, 
\begin{equation}
\ln\frac{p(\mathbb{E}\llbracket\hat{x}(\tau,\bullet)\rrbracket,\tau)}{p(\mathbb{E}\llbracket\hat{x}(0,\bullet)\rrbracket,0)}=\ln\frac{|\phi(\mathbb{E}\llbracket\hat{x}(\tau,\bullet)\rrbracket)|^{2}}{|\phi(\mathbb{E}\llbracket\hat{x}(0,\bullet)\rrbracket)|^{2}}+O(\mathbb{E}\llbracket\overset{2}{\otimes}\delta\hat{x}(\tau,\bullet)\rrbracket)\,.
\end{equation}

\begin{lem}
\begin{leftbar}The time-evolution of the probability $\mathscr{P}(\varOmega_{\tau}^{\mathrm{ave}})\coloneqq p(\mathbb{E}\llbracket\hat{x}(\tau,\bullet)\rrbracket,\tau)$
satisfying (\ref{eq: eq of continuity}) is bellow:
\begin{equation}
\mathscr{P}(\varOmega_{\tau}^{\mathrm{ave}})=\mathscr{P}(\varOmega_{0}^{\mathrm{ave}})\times\frac{|\phi(\mathbb{E}\llbracket\hat{x}(\tau,\bullet)\rrbracket)|^{2}}{|\phi(\mathbb{E}\llbracket\hat{x}(0,\bullet)\rrbracket)|^{2}}+O(\mathbb{E}\llbracket\overset{2}{\otimes}\delta\hat{x}(\tau,\bullet)\rrbracket)\label{eq: prob_evo}
\end{equation}
\end{leftbar}
\end{lem}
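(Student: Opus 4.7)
The plan is to derive the transport-like equation (\ref{eq: time_evo_prob}) for $\ln p$ restricted to the mean trajectory, and then integrate it in $\tau$. First, I would take the logarithmic form of the continuity equation (\ref{eq: eq of continuity}), namely
\[
\partial_{\tau}\ln p(x,\tau)+\mathrm{Re}\{\mathcal{V}^{\mu}(x)\}\,\partial_{\mu}\ln p(x,\tau)+\partial_{\mu}\mathrm{Re}\{\mathcal{V}^{\mu}(x)\}=0,
\]
valid on $x\in\hat{x}(\tau,\varOmega)$ for each $\tau$. The job of the divergence term $\partial_{\mu}\mathrm{Re}\{\mathcal{V}^{\mu}\}$ is to be re-expressed using the Klein-Gordon current identity $\mathrm{Re}\{\mathcal{V}^{\mu}\}=|\phi|^{-2}j_{\mathrm{K\mathchar`-G}}^{\mu}$ together with $\partial_{\mu}j_{\mathrm{K\mathchar`-G}}^{\mu}=0$ from Volume I \cite{Seto Volume I}. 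Logarithmically differentiating the current identity gives $\partial_{\mu}\mathrm{Re}\{\mathcal{V}^{\mu}\}=-\mathrm{Re}\{\mathcal{V}^{\mu}\}\,\partial_{\mu}\ln|\phi|^{2}$, so the continuity equation becomes the advective form already displayed in the text just above the statement.

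Second, I would restrict to the mean worldline $x=\mathbb{E}\llbracket\hat{x}(\tau,\bullet)\rrbracket$. Here the crucial input is Lemma \ref{semi-classical-relation}, which lets me trade $\mathrm{Re}\{\mathcal{V}^{\mu}(\mathbb{E}\llbracket\hat{x}(\tau,\bullet)\rrbracket)\}$ for $d\mathbb{E}\llbracket\hat{x}^{\mu}(\tau,\bullet)\rrbracket/d\tau$ at the cost of $O(\mathbb{E}\llbracket\overset{2}{\otimes}\delta\hat{x}(\tau,\bullet)\rrbracket)$. With this replacement the Lagrangian operator $\partial_{\tau}+\mathrm{Re}\{\mathcal{V}^{\mu}\}\partial_{\mu}$ collapses to the ordinary derivative $d/d\tau$ along $\tau\mapsto\mathbb{E}\llbracket\hat{x}(\tau,\bullet)\rrbracket$, both on the $\ln p$ side and on the $\ln|\phi|^{2}$ source, which is exactly (\ref{eq: time_evo_prob}).

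Third, since the right-hand side of (\ref{eq: time_evo_prob}) is a total $\tau$-derivative, integration from $0$ to $\tau$ immediately yields
\[
\ln p(\mathbb{E}\llbracket\hat{x}(\tau,\bullet)\rrbracket,\tau)-\ln p(\mathbb{E}\llbracket\hat{x}(0,\bullet)\rrbracket,0)=\ln|\phi(\mathbb{E}\llbracket\hat{x}(\tau,\bullet)\rrbracket)|^{2}-\ln|\phi(\mathbb{E}\llbracket\hat{x}(0,\bullet)\rrbracket)|^{2}+O(\mathbb{E}\llbracket\overset{2}{\otimes}\delta\hat{x}(\tau,\bullet)\rrbracket),
\]
and exponentiating, together with the identification $\mathscr{P}(\varOmega_{\tau}^{\mathrm{ave}})=p(\mathbb{E}\llbracket\hat{x}(\tau,\bullet)\rrbracket,\tau)$ set in the statement, gives (\ref{eq: prob_evo}).

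The main technical obstacle will be the error bookkeeping in the second step. The substitution $\mathrm{Re}\{\mathcal{V}^{\mu}\}\to d\mathbb{E}\llbracket\hat{x}^{\mu}\rrbracket/d\tau$ has to be made simultaneously inside the advective operator acting on $\ln p$ and inside the source proportional to $\partial_{\mu}\ln|\phi|^{2}$, and I have to check that the two $O(\mathbb{E}\llbracket\overset{2}{\otimes}\delta\hat{x}\rrbracket)$ residues remain of that same order after time integration (i.e., they do not secularly grow so as to contaminate the leading factor $|\phi(\mathbb{E}\llbracket\hat{x}(\tau,\bullet)\rrbracket)|^{2}/|\phi(\mathbb{E}\llbracket\hat{x}(0,\bullet)\rrbracket)|^{2}$). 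A subsidiary point is to confirm that the RHS of (\ref{eq: time_evo_prob}) is well-defined along the mean worldline, i.e., that $|\phi|^{2}$ does not vanish on $\{\mathbb{E}\llbracket\hat{x}(\tau,\bullet)\rrbracket\mid\tau\in\mathbb{R}\}$, as flagged just before (\ref{eq: time_evo_prob}); the statement should be read under this regularity assumption.
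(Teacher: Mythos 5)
Your proposal follows essentially the same route as the paper: logarithmic form of the continuity equation, elimination of the divergence term via $\mathrm{Re}\{\mathcal{V}^{\mu}\}=|\phi|^{-2}j_{\mathrm{K\mathchar`-G}}^{\mu}$ with $\partial_{\mu}j_{\mathrm{K\mathchar`-G}}^{\mu}=0$, restriction to the mean worldline using Lemma \ref{semi-classical-relation} to reach (\ref{eq: time_evo_prob}), then integration and exponentiation under the same well-definedness proviso the paper states. The argument is correct as proposed; no substantive difference from the paper's own derivation.
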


\section{Conclusion and discussion\label{Sect-Summary}}

We discussed the formulation of the kinematics and the dynamics of
a stochastic scalar electron equivalent to the Klein-Gordon particle
interacting with a field. Especially, we focused the derivation of
the radiation reaction effect as an application of {\bf Volume I}
\cite{Seto Volume I}.  Let us summarize the results of this article.\begin{snugshade}\begin{leftbar}
\begin{conclusion}[Radiation reaction]
\noindent \label{conclusion_1}For $\left(\mathit{\Omega},\mathcal{F},\mathscr{P}\right)$,
consider a kinematics of a D-progressive $\hat{x}(\circ,\bullet)$,
\begin{equation}
d\hat{x}(\tau,\omega)=\mathcal{V}_{\pm}(\hat{x}(\tau,\omega))d\tau+\lambda\times dW_{\pm}(\tau,\omega)
\end{equation}
on $(\mathbb{A}^{4}(\mathbb{V_{\mathrm{M}}^{\mathrm{4}}},g),\mathscr{B}(\mathbb{A}^{4}(\mathbb{V_{\mathrm{M}}^{\mathrm{4}}},g)),\mu)$.
This is coupled with the following dynamics for $\mathscr{\mathcal{V}}=(1-i)/2\times\mathcal{V}_{+}+(1+i)/2\times\mathcal{V}_{-}\in\mathbb{V_{\mathrm{M}}^{\mathrm{4}}}\oplus i\mathbb{V_{\mathrm{M}}^{\mathrm{4}}}$:
\begin{eqnarray}
m_{0}\mathfrak{D_{\tau}}\mathcal{V}^{\mu}(\hat{x}(\tau,\omega)) & = & -eF_{\mathrm{ex}}^{\mu\nu}(\hat{x}(\tau,\omega))\cdot\mathcal{V}_{\nu}(\hat{x}(\tau,\omega))\nonumber \\
 &  & +\frac{m_{0}\tau_{0}}{c^{2}}\times\left[\begin{gathered}\dot{a}^{\mu}(\hat{x}(\tau,\omega))\cdot\mathrm{Re}\left\{ \mathcal{V}^{\nu}(\hat{x}(\tau,\omega))\right\} \\
-\mathrm{Re}\left\{ \mathcal{V}^{\mu}(\hat{x}(\tau,\omega))\right\} \cdot\dot{a}^{\nu}(\hat{x}(\tau,\omega))
\end{gathered}
\right]\cdot\mathcal{V}_{\nu}(\hat{x}(\tau,\omega))\label{eq: EOM-final}
\end{eqnarray}

\noindent Where, $\dot{a}(\hat{x}(\tau,\omega))$ is defined by (\ref{eq: dot-a-2}).
This is equivalent to the Klein-Gordon equation with its radiation,
i.e., the quantization of LAD equation (\ref{eq:LAD}).
\end{conclusion}
\end{leftbar}\end{snugshade}The dynamics of a radiating stochastic
particle corresponding to the LAD equation was hereby derived, however,
(\ref{eq: EOM-final}) includes many higher order derivatives. Even
in the case of the LAD equation, the exponential factor $dv^{\mu}/d\tau\propto\exp(\tau/\tau_{0})$,
namely, the run-away problem remains in it \cite{Dirac(1938a)}. This
suffers us in the actual estimations and numerical simulations of
its experimental designs. Let us introduce the Ford-O'Connell \cite{Ford-O'Connell}/Landau-Lifshitz
\cite{Landau-Lifshitz} schemes for it, too. In the case of the Landau-Lifshitz
method, it is imposed by the following perturbation of $m_{0}\tau_{0}/c^{2}\times(d^{2}v^{\mu}/d\tau^{2}\cdot v^{\nu}-d^{2}v^{\nu}/d\tau^{2}\cdot v^{\mu})v_{\nu}$
in the LAD equation (\ref{eq:LAD}) with respect to $\tau_{0}=O(10^{-24}\mathrm{sec})$:
\begin{equation}
\frac{dv^{\mu}}{d\tau}=-\frac{e}{m_{0}}F_{\mathrm{ex}}^{\mu\nu}v_{\nu}+O(\tau_{0})
\end{equation}
\begin{equation}
\frac{d^{2}v^{\mu}}{d\tau^{2}}=-\frac{e}{m_{0}}\partial_{\alpha}F_{\mathrm{ex}}^{\mu\nu}\cdot v^{\alpha}v_{\nu}+\frac{e^{2}}{m_{0}^{2}}g_{\alpha\beta}F_{\mathrm{ex}}^{\mu\alpha}F_{\mathrm{ex}}^{\beta\nu}v_{\nu}+O(\tau_{0})
\end{equation}
Then, the Landau-Lifshitz equation is derived:
\begin{equation}
m_{0}\frac{dv^{\mu}}{d\tau}(\tau)=-eF_{\mathrm{ex}}^{\mu\nu}(x(\tau))v_{\nu}(\tau)-eF_{\mathrm{LL}}^{\mu\nu}(x(\tau))v_{\nu}(\tau)+O(\tau_{0}^{2})\label{eq: LL-classic}
\end{equation}
\begin{equation}
F_{\mathrm{LL}}^{\mu\nu}(x(\tau))\coloneqq\tau_{0}v^{\alpha}(\tau)\partial_{\alpha}F_{\mathrm{ex}}^{\mu\nu}(x(\tau))-\frac{e\tau_{0}}{m_{0}c^{2}}\left(\delta_{\theta}^{\mu}\delta_{\epsilon}^{\nu}-\delta_{\theta}^{\nu}\delta_{\epsilon}^{\mu}\right)g_{\alpha\beta}F_{\mathrm{ex}}^{\theta\alpha}(x)F_{\mathrm{ex}}^{\beta\gamma}(x)v_{\gamma}(\tau)v^{\epsilon}(\tau)\label{eq: F_LL_classical}
\end{equation}
 For applying it to the present model, consider Ehrenfest's theorem
of (\ref{eq: EOM-final}) at first.
\begin{eqnarray}
m_{0}\frac{d^{2}\mathbb{E}\llbracket\hat{x}(\tau,\bullet)\rrbracket}{d\tau^{2}} & = & -eF_{\mathrm{ex}}^{\mu\nu}(\mathbb{E}\llbracket\hat{x}(\tau,\bullet)\rrbracket)\frac{d\mathbb{E}\llbracket\hat{x}_{\nu}(\tau,\bullet)\rrbracket}{d\tau}\nonumber \\
 &  & +\frac{m_{0}\tau_{0}}{c^{2}}\times\left[\begin{gathered}\dot{a}^{\mu}(\mathbb{\mathbb{E}}\llbracket\hat{x}(\tau,\bullet)\rrbracket)\cdot\frac{d\mathbb{E}\llbracket\hat{x}^{\nu}(\tau,\bullet)\rrbracket}{d\tau}\\
-\dot{a}^{\nu}(\mathbb{\mathbb{E}}\llbracket\hat{x}(\tau,\bullet)\rrbracket)\cdot\frac{d\mathbb{E}\llbracket\hat{x}^{\mu}(\tau,\bullet)\rrbracket}{d\tau}
\end{gathered}
\right]\frac{d\mathbb{E}\llbracket\hat{x}_{\nu}(\tau,\bullet)\rrbracket}{d\tau}\nonumber \\
 &  & +O(\mathbb{E}\llbracket\overset{2}{\otimes}\delta\hat{x}(\tau,\bullet)\rrbracket)
\end{eqnarray}
The readers have to compare this and the LAD equation (\ref{eq:LAD}).
Its perturbation for the Landau-Lifshitz scheme is imposed by 
\begin{eqnarray}
\frac{d^{2}\mathbb{\mathbb{E}}\llbracket\hat{x}^{\mu}(\tau,\bullet)\rrbracket}{d\tau^{2}} & = & -\frac{e}{m_{0}}\times F_{\mathrm{ex}}^{\mu\nu}(\mathbb{\mathbb{E}}\llbracket\hat{x}(\tau,\bullet)\rrbracket)\frac{d\mathbb{\mathbb{E}}\llbracket\hat{x}_{\nu}(\tau,\bullet)\rrbracket}{d\tau}+O(\tau_{0},\mathbb{E}\llbracket\overset{2}{\otimes}\delta\hat{x}(\tau,\bullet)\rrbracket)
\end{eqnarray}
and also
\begin{eqnarray}
\frac{d^{3}\mathbb{\mathbb{E}}\llbracket\hat{x}^{\mu}(\tau,\bullet)\rrbracket}{d\tau^{3}} & = & -\frac{e}{m_{0}}\times\frac{dF_{\mathrm{ex}}^{\mu\nu}(\mathbb{\mathbb{E}}\llbracket\hat{x}(\tau,\bullet)\rrbracket)}{d\tau}\frac{d\mathbb{\mathbb{E}}\llbracket\hat{x}_{\nu}(\tau,\bullet)\rrbracket}{d\tau}\nonumber \\
 &  & +\frac{e^{2}}{m_{0}^{2}}\times g_{\alpha\beta}F_{\mathrm{ex}}^{\mu\alpha}(\mathbb{\mathbb{E}}\llbracket\hat{x}(\tau,\bullet)\rrbracket)F_{\mathrm{ex}}^{\beta\nu}(\mathbb{\mathbb{E}}\llbracket\hat{x}(\tau,\bullet)\rrbracket)\frac{d\mathbb{\mathbb{E}}\llbracket\hat{x}_{\nu}(\tau,\bullet)\rrbracket}{d\tau}+O(\tau_{0},\mathbb{E}\llbracket\overset{2}{\otimes}\delta\hat{x}(\tau,\bullet)\rrbracket)\,.
\end{eqnarray}
Thus, the effective radiation field at $x=\mathbb{E}\llbracket\hat{x}(\tau,\bullet)\rrbracket$
becomes
\begin{equation}
\mathfrak{F}(\mathbb{E}\llbracket\hat{x}(\tau,\bullet)\rrbracket)=\mathscr{P}(\varOmega_{\tau}^{\mathrm{ave}})\times F_{\mathrm{LL}}(\mathbb{\mathbb{E}}\llbracket\hat{x}(\tau,\bullet)\rrbracket)+\frac{3}{2}\tau_{0}\frac{d\mathscr{P}(\varOmega_{\tau}^{\mathrm{ave}})}{d\tau}\times F_{\mathrm{ex}}(\mathbb{\mathbb{E}}\llbracket\hat{x}(\tau,\bullet)\rrbracket)+O(\tau_{0}^{2},\mathbb{E}\llbracket\overset{2}{\otimes}\delta\hat{x}(\tau,\bullet)\rrbracket)\,,
\end{equation}
where, $(x(\tau),v(\tau))$ in $F_{\mathrm{LL}}(x(\tau))$ of (\ref{eq: F_LL_classical})
is replaced by $(\mathbb{\mathbb{E}}\llbracket\hat{x}(\tau,\bullet)\rrbracket,d\mathbb{\mathbb{E}}\llbracket\hat{x}(\tau,\bullet)\rrbracket/d\tau)$.
Thus, equation of motion along $\mathbb{E}\llbracket\hat{x}(\tau,\bullet)\rrbracket$
is 
\begin{eqnarray}
m_{0}\frac{d^{2}\mathbb{E}\llbracket\hat{x}^{\mu}(\tau,\bullet)\rrbracket}{d\tau^{2}} & = & \left[1+\frac{3}{2}\tau_{0}\frac{d\mathscr{P}(\varOmega_{\tau}^{\mathrm{ave}})}{d\tau}\right]\times\left[-eF_{\mathrm{ex}}^{\mu\nu}(\mathbb{\mathbb{E}}\llbracket\hat{x}(\tau,\bullet)\rrbracket)\frac{d\mathbb{\mathbb{E}}\llbracket\hat{x}_{\nu}(\tau,\bullet)\rrbracket}{d\tau}\right]\nonumber \\
 &  & +\mathscr{P}(\varOmega_{\tau}^{\mathrm{ave}})\times\left[-eF_{\mathrm{LL}}^{\mu\nu}(\mathbb{\mathbb{E}}\llbracket\hat{x}(\tau,\bullet)\rrbracket)\cdot\frac{d\mathbb{E}\llbracket\hat{x}_{\nu}(\tau,\bullet)\rrbracket}{d\tau}\right]\nonumber \\
 &  & +O(\tau_{0}^{2},\mathbb{E}\llbracket\overset{2}{\otimes}\delta\hat{x}(\tau,\bullet)\rrbracket)\,.
\end{eqnarray}
By using
\begin{equation}
\int_{\varOmega_{\tau}^{\mathrm{ave}}}d\mathscr{P}(\omega)\,\mathfrak{F}(\hat{x}(\tau,\omega))=\int_{\varOmega_{\tau}^{\mathrm{ave}}}d\mathscr{P}(\omega)\left[F_{\mathrm{LL}}(\hat{x}(\tau,\omega))+\frac{3}{2}\tau_{0}\frac{d\ln\mathscr{P}(\varOmega_{\tau}^{\mathrm{ave}})}{d\tau}\times F_{\mathrm{ex}}(\hat{x}(\tau,\omega))\right]+O(\tau_{0}^{2})\,,
\end{equation}
 the following is the perturbed equation of (\ref{eq: EOM-final}):

\begin{snugshade}\begin{leftbar}
\begin{lem}[Landau-Lifshitz's approximation]
\noindent Consider the perturbation of (\ref{eq: EOM-final}) for
$\tau_{0}$,
\begin{eqnarray}
m_{0}\mathfrak{D_{\tau}}\mathcal{V}^{\mu}(\hat{x}(\tau,\omega)) & = & -e\left[1+\frac{3}{2}\tau_{0}\frac{d\ln\mathscr{P}(\varOmega_{\tau}^{\mathrm{ave}})}{d\tau}\right]F_{\mathrm{ex}}^{\mu\nu}(\hat{x}(\tau,\omega))\cdot\mathcal{V}_{\nu}(\hat{x}(\tau,\omega))\nonumber \\
 &  & -eF_{\mathrm{LL}}^{\mu\nu}(\hat{x}(\tau,\omega))\cdot\mathcal{V}_{\nu}(\hat{x}(\tau,\omega))+O(\tau_{0}^{2})\label{eq: LL-stochastic}
\end{eqnarray}

\noindent where, $F_{\mathrm{LL}}^{\mu\nu}(\hat{x}(\tau,\omega))$
is defined by 
\begin{equation}
F_{\mathrm{LL}}^{\mu\nu}(x)=\tau_{0}\mathrm{Re}\left\{ \mathcal{V}^{\alpha}(x)\right\} \partial_{\alpha}F_{\mathrm{ex}}^{\mu\nu}(x)-\frac{e\tau_{0}}{m_{0}c^{2}}\left(\delta_{\theta}^{\mu}\delta_{\epsilon}^{\nu}-\delta_{\theta}^{\nu}\delta_{\epsilon}^{\mu}\right)g_{\alpha\beta}F_{\mathrm{ex}}^{\theta\alpha}(x)F_{\mathrm{ex}}^{\beta\gamma}(x)\mathrm{Re}\left\{ \mathcal{V}_{\gamma}(x)\right\} \mathrm{Re}\left\{ \mathcal{V}^{\epsilon}(x)\right\} \,.
\end{equation}
(\ref{eq: LL-stochastic}) corresponds to (\ref{eq: LL-classic})
\cite{Landau-Lifshitz}.
\end{lem}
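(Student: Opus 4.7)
The plan is to apply the classical Landau--Lifshitz reduction-of-order trick to the stochastic equation of motion (\ref{eq: EOM-final}). Since the antisymmetric bracket on the right-hand side of (\ref{eq: EOM-final}) carries an explicit prefactor $m_{0}\tau_{0}/c^{2}$, the equation is already in a form suitable for perturbation in $\tau_{0}$: to zeroth order it reduces to the stochastic Lorentz equation $m_{0}\mathfrak{D}_{\tau}\mathcal{V}^{\mu}=-eF_{\mathrm{ex}}^{\mu\nu}\mathcal{V}_{\nu}+O(\tau_{0})$. Taking expectations and invoking Nelson's partial integration formula together with \textbf{Lemma \ref{semi-classical-relation}}, I would extract from this zeroth-order equation closed expressions for $d^{2}\mathbb{E}\llbracket\hat{x}^{\mu}(\tau,\bullet)\rrbracket/d\tau^{2}$ and $d^{3}\mathbb{E}\llbracket\hat{x}^{\mu}(\tau,\bullet)\rrbracket/d\tau^{3}$ purely in terms of $F_{\mathrm{ex}}$ and its first derivative along the averaged trajectory, exactly as the author has already indicated in the two displayed equations preceding the lemma.

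Next, I would substitute these expressions into the definition (\ref{eq: dot-a-2}) of $\dot{a}(\mathbb{E}\llbracket\hat{x}(\tau,\bullet)\rrbracket)$. The $d^{3}\mathbb{E}\llbracket\hat{x}\rrbracket/d\tau^{3}$ contribution, after antisymmetrization in $(\mu,\nu)$ against $d\mathbb{E}\llbracket\hat{x}^{\nu}\rrbracket/d\tau$ and contraction with $\mathcal{V}_{\nu}$, reproduces exactly the classical Landau--Lifshitz field
\begin{equation}
F_{\mathrm{LL}}^{\mu\nu}(x)=\tau_{0}\mathrm{Re}\{\mathcal{V}^{\alpha}(x)\}\partial_{\alpha}F_{\mathrm{ex}}^{\mu\nu}(x)-\frac{e\tau_{0}}{m_{0}c^{2}}\bigl(\delta_{\theta}^{\mu}\delta_{\epsilon}^{\nu}-\delta_{\theta}^{\nu}\delta_{\epsilon}^{\mu}\bigr)g_{\alpha\beta}F_{\mathrm{ex}}^{\theta\alpha}F_{\mathrm{ex}}^{\beta\gamma}\mathrm{Re}\{\mathcal{V}_{\gamma}\}\mathrm{Re}\{\mathcal{V}^{\epsilon}\}
\end{equation}
when the real velocities are identified with $\mathrm{Re}\{\mathcal{V}\}$ via (\ref{eq: relation-001-1}). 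The remaining piece $\tfrac{3}{2}(d\ln\mathscr{P}(\varOmega_{\tau}^{\mathrm{ave}})/d\tau)d^{2}\mathbb{E}\llbracket\hat{x}\rrbracket/d\tau^{2}$ is proportional (through the zeroth-order equation) to $F_{\mathrm{ex}}^{\mu\nu}\mathcal{V}_{\nu}$, and therefore renormalizes the bare Lorentz term by the multiplicative factor $1+\tfrac{3}{2}\tau_{0}(d\ln\mathscr{P}(\varOmega_{\tau}^{\mathrm{ave}})/d\tau)$ displayed in (\ref{eq: LL-stochastic}).

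Having established the identity along the averaged trajectory, I would lift it to a generic sample path by using the averaging relation
\begin{equation}
\int_{\varOmega_{\tau}^{\mathrm{ave}}}d\mathscr{P}(\omega)\,\mathfrak{F}(\hat{x}(\tau,\omega))=\int_{\varOmega_{\tau}^{\mathrm{ave}}}d\mathscr{P}(\omega)\Bigl[F_{\mathrm{LL}}(\hat{x}(\tau,\omega))+\tfrac{3}{2}\tau_{0}\tfrac{d\ln\mathscr{P}(\varOmega_{\tau}^{\mathrm{ave}})}{d\tau}F_{\mathrm{ex}}(\hat{x}(\tau,\omega))\Bigr]+O(\tau_{0}^{2})
\end{equation}
displayed just above the lemma, and then dropping the integral by the same argument that lifted \textbf{Lemma \ref{F_average}} to \textbf{Lemma \ref{F_full}}. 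Substituting the resulting pointwise $\mathfrak{F}(\hat{x}(\tau,\omega))$ into the radiation-reaction term of (\ref{eq: EOM-final}) produces (\ref{eq: LL-stochastic}).

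The main obstacle I anticipate is the careful bookkeeping of error terms: the $O(\mathbb{E}\llbracket\overset{2}{\otimes}\delta\hat{x}(\tau,\bullet)\rrbracket)$ remainder in (\ref{eq: dot-a-2}) and the $O(\tau_{0})$ corrections arising when passing from the zeroth-order Lorentz equation to $d^{2}\mathbb{E}\llbracket\hat{x}\rrbracket/d\tau^{2}$ and $d^{3}\mathbb{E}\llbracket\hat{x}\rrbracket/d\tau^{3}$ must be tracked simultaneously, and one must verify that their product with the explicit $\tau_{0}$ prefactor of the radiation-reaction bracket is genuinely of order $\tau_{0}^{2}$ so that it can be legitimately absorbed into the stated remainder. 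In particular, the new stochastic coefficient $\tfrac{3}{2}\tau_{0}d\ln\mathscr{P}(\varOmega_{\tau}^{\mathrm{ave}})/d\tau$ must be shown to survive this truncation rather than being swallowed by the error, which is the novel point distinguishing (\ref{eq: LL-stochastic}) from the classical Landau--Lifshitz equation (\ref{eq: LL-classic}).
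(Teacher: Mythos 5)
Your proposal follows essentially the same route as the paper: take Ehrenfest's theorem of (\ref{eq: EOM-final}), use the zeroth-order (Lorentz) equation to express $d^{2}\mathbb{E}\llbracket\hat{x}\rrbracket/d\tau^{2}$ and $d^{3}\mathbb{E}\llbracket\hat{x}\rrbracket/d\tau^{3}$ in terms of $F_{\mathrm{ex}}$, substitute into $\dot{a}$ of (\ref{eq: dot-a-2}) so that $\mathfrak{F}(\mathbb{E}\llbracket\hat{x}(\tau,\bullet)\rrbracket)=\mathscr{P}(\varOmega_{\tau}^{\mathrm{ave}})F_{\mathrm{LL}}+\tfrac{3}{2}\tau_{0}\,(d\mathscr{P}(\varOmega_{\tau}^{\mathrm{ave}})/d\tau)F_{\mathrm{ex}}+O(\tau_{0}^{2})$, and then lift to a sample path via the $\varOmega_{\tau}^{\mathrm{ave}}$-integral identity, exactly as the author does. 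The argument and the identification of both the $F_{\mathrm{LL}}$ term and the $1+\tfrac{3}{2}\tau_{0}\,d\ln\mathscr{P}(\varOmega_{\tau}^{\mathrm{ave}})/d\tau$ renormalization of the Lorentz term match the paper's derivation.
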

\noindent \end{leftbar}\end{snugshade}

Larmor's radiation formula is included in $\mathscr{P}(\varOmega_{\tau}^{\mathrm{ave}})\times[-eF_{\mathrm{LL}}^{\mu\nu}(\mathbb{\mathbb{E}}\llbracket\hat{x}(\tau,\bullet)\rrbracket)]\cdot d\mathbb{\mathbb{E}}\llbracket\hat{x}_{\nu}(\tau,\bullet)\rrbracket/d\tau]$\footnote{Larmor's formula is found as the coefficient of the ''direct radiation
term'' which is proportional to the velocity $d\mathbb{\mathbb{E}}\llbracket\hat{x}(\tau,\bullet)\rrbracket/d\tau$,
i.e., the term of $-dW/d\tau\times d\mathbb{\mathbb{E}}\llbracket\hat{x}(\tau,\bullet)\rrbracket/d\tau$. }:
\begin{equation}
\frac{dW}{dt}(\mathbb{\mathbb{E}}\llbracket\hat{x}(\tau,\bullet)\rrbracket)=-\mathscr{P}(\varOmega_{\tau}^{\mathrm{ave}})\times\frac{\tau_{0}}{m_{0}}f_{\alpha}(\mathbb{\mathbb{E}}\llbracket\hat{x}(\tau,\bullet)\rrbracket)\cdot f^{\alpha}(\mathbb{\mathbb{E}}\llbracket\hat{x}(\tau,\bullet)\rrbracket)+O(\tau_{0}^{2},\mathbb{E}\llbracket\overset{2}{\otimes}\delta\hat{x}(\tau,\bullet)\rrbracket)
\end{equation}
Where,
\begin{eqnarray}
f^{\mu}(\mathbb{\mathbb{E}}\llbracket\hat{x}(\tau,\bullet)\rrbracket) & = & -eF_{\mathrm{ex}}^{\mu\nu}(\mathbb{\mathbb{E}}\llbracket\hat{x}(\tau,\bullet)\rrbracket)\frac{d\mathbb{\mathbb{E}}\llbracket\hat{x}_{\nu}(\tau,\bullet)\rrbracket}{d\tau}\,.
\end{eqnarray}
Then, let us recall the classical radiation formula $dW_{\mathrm{classical}}/dt=-\tau_{0}/m_{0}\times f_{\alpha}(\mathbb{\mathbb{E}}\llbracket\hat{x}(\tau,\bullet)\rrbracket)\cdot f^{\alpha}(\mathbb{\mathbb{E}}\llbracket\hat{x}(\tau,\bullet)\rrbracket)+O(\tau_{0}^{2})$
(an acceleration $dv/d\tau$ in (\ref{eq: dW_classical/dt}) is perturbed
by the parameter of $\tau_{0}$. see also \cite{Landau-Lifshitz}),
simplify the above formula of $dW/dt$.\begin{snugshade}
\begin{thm}[Radiation formula for a D-process]
\begin{leftbar}The radiation formula for a stochastic scalar electron
is expressed as follows:
\begin{eqnarray}
\frac{dW}{dt}(\mathbb{\mathbb{E}}\llbracket\hat{x}(\tau,\bullet)\rrbracket) & = & \int_{\omega\in\varOmega_{\tau}^{\mathrm{ave}}}d\mathscr{P}(\omega)\frac{dW_{\mathrm{classical}}}{dt}(\hat{x}(\tau,\bullet))\nonumber \\
 & = & \mathscr{P}(\varOmega_{\tau}^{\mathrm{ave}})\times\frac{dW_{\mathrm{classical}}}{dt}(\mathbb{E}\llbracket\hat{x}(\tau,\bullet)\rrbracket)
\end{eqnarray}
\end{leftbar}
\end{thm}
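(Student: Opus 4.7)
The plan is to read off the radiation formula from the Landau-Lifshitz-reduced equation of motion along the mean trajectory $\mathbb{E}\llbracket\hat{x}(\tau,\bullet)\rrbracket$, exactly as in the classical derivation, and then rewrite the resulting coefficient as an integral over the sample set $\varOmega_{\tau}^{\mathrm{ave}}$.

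First I would start from the perturbed equation of motion that was already derived along $\mathbb{E}\llbracket\hat{x}(\tau,\bullet)\rrbracket$, isolate the piece proportional to the mean 4-velocity $d\mathbb{E}\llbracket\hat{x}(\tau,\bullet)\rrbracket/d\tau$ (the ``direct radiation term''), and identify $dW/dt$ at $\mathbb{E}\llbracket\hat{x}(\tau,\bullet)\rrbracket$ with its coefficient. Tracking the $F_{\mathrm{LL}}$ contribution through the double contraction $F_{\mathrm{LL}}^{\mu\nu}\cdot d\mathbb{E}\llbracket\hat{x}_\nu\rrbracket/d\tau$ leaves, to $O(\tau_0^2)$, precisely the invariant Larmor combination
\begin{equation}
\frac{dW}{dt}(\mathbb{E}\llbracket\hat{x}(\tau,\bullet)\rrbracket)=-\mathscr{P}(\varOmega_{\tau}^{\mathrm{ave}})\times\frac{\tau_0}{m_0}\,f_\alpha(\mathbb{E}\llbracket\hat{x}(\tau,\bullet)\rrbracket)\cdot f^\alpha(\mathbb{E}\llbracket\hat{x}(\tau,\bullet)\rrbracket)+O(\tau_0^2,\mathbb{E}\llbracket\overset{2}{\otimes}\delta\hat{x}(\tau,\bullet)\rrbracket),
\end{equation}
with $f^\mu=-eF_{\mathrm{ex}}^{\mu\nu}\,d\mathbb{E}\llbracket\hat{x}_\nu(\tau,\bullet)\rrbracket/d\tau$. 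Comparing this with the classical Larmor expression $dW_{\mathrm{classical}}/dt=-\tau_0/m_0\times f_\alpha f^\alpha+O(\tau_0^2)$ (which is (\ref{eq: dW_classical/dt}) after the same perturbative replacement of $dv/d\tau$) immediately yields the second equality
\begin{equation}
\frac{dW}{dt}(\mathbb{E}\llbracket\hat{x}(\tau,\bullet)\rrbracket)=\mathscr{P}(\varOmega_{\tau}^{\mathrm{ave}})\times\frac{dW_{\mathrm{classical}}}{dt}(\mathbb{E}\llbracket\hat{x}(\tau,\bullet)\rrbracket).
\end{equation}

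For the first equality, I would exploit the defining property of the sample set $\varOmega_{\tau}^{\mathrm{ave}}\coloneqq\{\omega\,|\,\hat{x}(\tau,\omega)=\mathbb{E}\llbracket\hat{x}(\tau,\bullet)\rrbracket\}$. On this set the integrand $dW_{\mathrm{classical}}/dt(\hat{x}(\tau,\omega))$ is $\omega$-independent and equals $dW_{\mathrm{classical}}/dt(\mathbb{E}\llbracket\hat{x}(\tau,\bullet)\rrbracket)$, so the integral factorises as $\mathscr{P}(\varOmega_{\tau}^{\mathrm{ave}})\times dW_{\mathrm{classical}}/dt(\mathbb{E}\llbracket\hat{x}(\tau,\bullet)\rrbracket)$, which matches the expression just established.

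The hard part, conceptually, is justifying the identification of the ``direct radiation'' coefficient with $dW/dt$ in the stochastic setting: one has to argue that the term proportional to the first derivative of the mean trajectory (as opposed to the Schott-type term in $\dot{a}$ and the external-field multiplicative correction $1+\tfrac{3}{2}\tau_0\,d\ln\mathscr{P}(\varOmega_\tau^{\mathrm{ave}})/d\tau$) is the one physically carrying away energy at rate $dW/dt$. This is the same argument as in the classical LAD-to-Larmor reduction, and I would invoke the footnote's convention explicitly; everything else is bookkeeping of $O(\tau_0^2)$ and $O(\mathbb{E}\llbracket\overset{2}{\otimes}\delta\hat{x}\rrbracket)$ error terms, which are already controlled by Lemmas \ref{sep_probable}--\ref{F_full}.
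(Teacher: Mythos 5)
Your proposal is correct and follows essentially the same route as the paper: read off the coefficient of the ``direct radiation term'' in the Landau--Lifshitz--reduced equation along $\mathbb{E}\llbracket\hat{x}(\tau,\bullet)\rrbracket$, compare it with the perturbed classical Larmor expression to obtain the factor $\mathscr{P}(\varOmega_{\tau}^{\mathrm{ave}})$, and rewrite that factor as the integral over $\varOmega_{\tau}^{\mathrm{ave}}$ using the fact that the integrand is constant there. No substantive difference from the paper's argument, including the appeal to the footnote's convention and the bookkeeping of the $O(\tau_{0}^{2},\mathbb{E}\llbracket\overset{2}{\otimes}\delta\hat{x}(\tau,\bullet)\rrbracket)$ terms.
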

\end{snugshade}Hereby, we can say $q(\chi)$ of the factor in (\ref{eq: Q=000026C Larmor})
is regarded as $\mathscr{P}(\varOmega_{\tau}^{\mathrm{ave}})$ the
probability at the expectation of the particle position.

However in the real practices, the analytical derivation of $\mathscr{P}(\varOmega_{\tau}^{\mathrm{ave}})$
by solving (\ref{eq: prob_evo}) is complicated. Therefore, we assume
the simplified condition for the first experiment on high-intensity
field physics \cite{RA5}. That is just an irradiation of a highly
energetic electron by a long-focal and high-intensity laser. If we
assume it, a plane wave condition becomes feasible. The radiation
formula by the non-linear Compton scattering (the single photon emission
process) of a scalar electron \cite{A. Sokolov(1986)} is derived
like 
\begin{equation}
\frac{dW}{dt}=q_{scalar}(\chi)\times\frac{dW_{\mathrm{classical}}}{dt}\,,
\end{equation}
\begin{equation}
q_{\mathrm{scalar}}(\chi)=\frac{9\sqrt{3}}{8\pi}\int_{0}^{\chi^{-1}}dr\,r\int_{\frac{r}{1-\chi r}}^{\infty}dr'\,K_{5/3}(r')\,,\label{eq: Xi}
\end{equation}
with the definition of the non-linearity parameter,
\begin{equation}
\chi\coloneqq\frac{3}{2}\frac{\hbar}{m_{0}^{2}c^{3}}\sqrt{-f_{\alpha}(\mathbb{\mathbb{E}}\llbracket\hat{x}(\tau,\bullet)\rrbracket)\cdot f^{\alpha}(\mathbb{\mathbb{E}}\llbracket\hat{x}(\tau,\bullet)\rrbracket)}\,.
\end{equation}
By putting the difference $\delta\mathscr{P}(\varOmega_{\tau}^{\mathrm{ave}})$
between $\mathscr{P}(\varOmega_{\tau}^{\mathrm{ave}})$ and $q_{\mathrm{scalar}}(\chi)$
in general,
\begin{equation}
\mathscr{P}(\varOmega_{\tau}^{\mathrm{ave}})\coloneqq q_{\mathrm{scalar}}(\chi)+\delta\mathscr{P}(\varOmega_{\tau}^{\mathrm{ave}})\,.
\end{equation}
Since $r$ is characterized by the relation $r\coloneqq\chi^{-1}\times\hbar\omega/m_{0}c^{2}\gamma$,
the radiation spectrum can be introduced as follows (see also \cite{A. Sokolov(1986),I. Sokolov(2011a),Landau-Lifshitz}):
\begin{equation}
\frac{d^{2}W}{dtd(\hbar\omega)}=-\frac{\tau_{0}}{m_{0}}g_{\mu\nu}f_{\mathrm{Re}}^{\mu}(\mathbb{\mathbb{E}}\llbracket\hat{x}(\tau,\bullet)\rrbracket)f_{\mathrm{Re}}^{\nu}(\mathbb{\mathbb{E}}\llbracket\hat{x}(\tau,\bullet)\rrbracket)\times\left[\frac{dq_{\mathrm{scalar}}(\chi)}{d(\hbar\omega)}+\frac{d\delta\mathscr{P}(\varOmega_{\tau}^{\mathrm{ave}})}{d(\hbar\omega)}\right]
\end{equation}
 
\begin{equation}
\frac{dq_{\mathrm{scalar}}(\chi)}{d(\hbar\omega)}=\frac{9\sqrt{3}}{8\pi\chi^{2}}\frac{\hbar\omega}{(m_{0}c^{2}\gamma)^{2}}\int_{\frac{r}{1-\chi r}}^{\infty}dr'\,K_{5/3}(r')
\end{equation}
So, the existence of $\delta\mathscr{P}(\varOmega_{\tau}^{\mathrm{ave}})$
is the difference from the non-linear Compton scatterings under the
plane wave condition. Thus, plotting $\mathscr{P}(\varOmega_{\tau}^{\mathrm{ave}})$
like {\bf Figure \ref{q_chi}} is interesting in the real experiments
by a high-intensity laser \cite{RA5}.

As the further works, the deeper analysis of this method and numerical
simulations have to be expected to innovate high-intensity field physics
toward together with real experiments carried out by the state-of-the-arts
$O(10\mathrm{PW})$ lasers \cite{ELI-NP,ELI-beams,ELI-ALPS,RA5}.

\section*{\addcontentsline{toc}{section}{Acknowledgements}Acknowledgments}

This work is supported by Extreme Light Infrastructure \textendash{}
Nuclear Physics (ELI-NP) \textendash{} Phase I, and also Phase II,
a project co-financed by the Romanian Government and the European
Union through the European Regional Development Fund.


\begin{thebibliography}{10}
\bibitem{Seto Volume I} K. Seto, ``A Brownian Particle and Fields
I: Construction of Kinematics and Dynamics ({\bf Volume I})'' (2016).

\bibitem{Mourou(1997a)}G. A. Mourou, C. Barty, and M. D. Perry, Phys.
Today {\bf 51}, 22 (1997).

\bibitem{ELI-NP} ELI-NP: \url{https://www.eli-np.ro/}.

\bibitem{ELI-beams} ELI-beamlines: \url{http://www.eli-beams.eu/}.

\bibitem{ELI-ALPS} ELI-ALPS: \url{http://www.eli-hu.hu/}.

\bibitem{Piazza(2012a)} A. Di Piazza, C. M\"{u}ller, K. Z. Hatsagortsyan,
and C. H. Keiel, Rev. Mod. Phys. {\bf 84}, 1177 (2012).

\bibitem{Larmor} J. Larmor, Phil. Trans. Roy. Soc. London A {\bf 190}
205 (1897).

\bibitem{Dirac(1938a)} P. A. M. Dirac, Proc. Roy. Soc. A {\bf 167},
148 (1938).

\bibitem{Seto(2015a)} K. Seto, Prog. Theor. Exp. Phys., {\bf 2015},
103A01 (2015).

\bibitem{I. Sokolov(2011a)}  I. V. Sokolov, N. M. Naumova, and J.
A. Nees, Phys. Plasmas {\bf 18}, 093109 (2011).

\bibitem{Volkov(1935)} D. M. Volkov, Z. Phys {\bf 94}, 250 (1935).

\bibitem{Bergou(1980)} J. Bergou, and S. Varro, J. Phys. A: Math.
Gen. {\bf 12}, 2823 (1980).

\bibitem{Zakowicz(2005)} S. Zakowicz, J. Math. Phys. {\bf 46}, 032304
(2005).

\bibitem{Boca-Florescu(2010)} M. Boca, and V. Florescu, Rom. J. Phys.
{\bf 55}, 511 (2010).

\bibitem{Berestetskii-Lifshitz QED} V. B. Berestetskii, E. M. Lifshitz,
and L. P. Pitaevskii, {\it Quantum Electrodynamics} (Elsevier, Oxford,
1982).

\bibitem{Furry(1951)} W. H. Furry, Phys. Rev. {\bf 85}, 115 (1951).

\bibitem{Ritus(1972a)} V. I. Ritus, Ann. Phys. {\bf 69}, 555 (1972).

\bibitem{Brown-Kibble(1964)} L.L. Brown, and T. W. B. Kibble, Phys
Rev. {\bf 133}, A705 (1964).

\bibitem{Nikishov(1964a))} A. I. Nikishov, and V. I. Ritus, Zh. Eksp.
Teor. Fiz. {\bf 46}, 776 (1963) {[}Sov. Phys. JETP {\bf 19}, 529
(1964){]}.

\bibitem{Nikishov(1964b)} A. I. Nikishov, and V. I. Ritus, Zh. Eksp.
Teor. Fiz. {\bf 46}, 1768 (1964) {[}Sov. Phys. JETP {\bf 19}, 1191
(1964){]}.

\bibitem{A. Sokolov(1986)} A. A. Sokolov, and I. M. Ternov, {\it "Radiation from Relativistic Electrons"},
(American Institute of Physics, transration series, 1986).

\bibitem{Koga(2005)} J. Koga, T. Z. Esirkepov, and S. V. Bulanov,
Phys. Plasmas {\bf 12}, 093106 (2005).

\bibitem{Harvey(2009)} C. Harvey, T. Heinzl, and A. Ilderton, Phys.
Rev. A {\bf 79}, 063407 (2009).

\bibitem{Mackenroth(2011)} F. Mackenroth, and A. Di Piazza, Phys.
Rev. A {\bf 83}, 032106 (2011).

\bibitem{Neitz(2013)} N. Neitz, and A. Di Piazzam Phys. Rev. Lett.
{\bf 111}, 054802 (2013).

\bibitem{Seipt(2013)} D. Seipt, and B. K\"{a}mpfer, Phys. Rev. A
{\bf 88}, 012127 (2013).

\bibitem{Anton(2013)} A. Ilderton, and G. Torgrimsson, Phys. Rev.
D {\bf88}, 025021 (2013). 

\bibitem{RA5} K. Homma, O. Tesileanu, L.D'Alessi, T. Hasebe, A. Ilderton,
T. Moritaka, Y. Nakamiya, K. Seto, and H. Utsunomiya, Rom. Rep.Phys.
{\bf 68}, Supplement, S233 (2016).

\bibitem{Roos(1966)} O. V. Roos, Phys. Rev. {\bf 150}, 1112 (1966).

\bibitem{Fried(1966)} Z. Fried, A. Baker, and D. Korff, Phys. Rev.
{\bf 151}, 1040 (1966).

\bibitem{Guo(1991)} D-S. Guo, and T \r{A}berg, J. Phys. B: At. Mol.
Opt. Phys. {\bf 24}, 349 (1991).

\bibitem{Piazza} A. Di Piazza, Phys. Rev. A {\bf 95}, 032121 (2017). 

\bibitem{SLAC} C. Bula, K. T. McDonald, E. J. Prebys, C. Bamber,
S. Boege, T. Kotseroglou, A. C. Melissinos, D. D. Meyerhofer, W. Ragg,
D. L. Burke, R. C. Field, G. Horton-Smith, A. C. Odian, J. E. Spencer,
and D. Walz, Phys. Rev. Lett. {\bf 76}, 3116 (1996); C. Bamber, S.
J. Boege, T. T. Koffas, T. Kotseroglou, A. C. Melissinos, D. D. Meyerhofer,
D. A. Reis, W. Raggi, C. Bula, K. T. McDonald, E. J. Prebys, D. L.
Burke, R. C. Field, G. Horton-Smith, J. E. Spencer, D. Walz, S. C.
Berridge, W. M. Bugg, K. Shmakov, and A. W. Weidemann, Phys. Rev.
D {\bf 60}, 092004 (1999).

\bibitem{Nelson(1966a)} E. Nelson, Phys. Rev. {\bf 150}, 1079 (1966).

\bibitem{Nelson(2001_book)} E. Nelson, {\it "Dynamical Theory of Brownian Motion"},
(Princeton University Press, 2nd Ed., 2001)

\bibitem{Hartemann(2001)} F. V. Hartemann, {\it "High-Field Electrodynamics"}
(CRC Press, 2001).

\bibitem{explanation1} This depends on the selection of the initial
condition. Namely, the retarded field $\mathcal{F}_{(\mathring{-})}$
is neglected when we get the initial condition at $\tau=+\infty$.
Then, $C=0$ and the dynamics is detected by the time-inversed evolution
from $\tau=+\infty$.

\bibitem{Ford-O'Connell} G. W. Ford, and R. F. O'Connell, Phys. Lett.
A {\bf 174}, 182 (1993).

\bibitem{Landau-Lifshitz} L. D. Landau, and E. M. Lifshitz, {\it "The Classical Theory of Fields"}
(Pergamon, New York, 1994).\end{thebibliography}
\end{document}